\documentclass[conference]{IEEEtran}
\IEEEoverridecommandlockouts

\usepackage{cite}
\usepackage{amsmath,amssymb,amsfonts}
\usepackage{amsmath,amsthm}
\usepackage{algorithmic}
\usepackage{graphicx}
\usepackage{textcomp}
\usepackage{subcaption}
\usepackage{enumerate}
\usepackage{xcolor}
\usepackage{booktabs}
\usepackage{url}
\usepackage[implicit=false]{hyperref}
\usepackage[linesnumbered,ruled,vlined]{algorithm2e}
\usepackage{balance}

\newcommand{\fullpaper}[1]{#1}
\newcommand{\icdepaper}[1]{}
\newcommand{\deleted}[1]{}
\newcommand{\added}[1]{\textcolor{red}{#1}}
\newcommand{\setlinepenalty}[1]{\linepenalty=#1}

\newtheoremstyle{MyStyle}%
  {3pt} %
  {3pt} %
  {} %
  {} %
  {\bfseries} %
  {.} %
  {.5em} %
  {} %
\theoremstyle{MyStyle}
\newtheorem{MyDefinition}{Definition}[section]
\newtheorem{MyLemma}{Lemma}[section]
\newtheorem{MyTheorem}{Theorem}[section]
\newtheorem{MyExample}{Example}[section]

\renewenvironment{proof}[1][\proofname]{{\noindent\bfseries #1. }}{\qed\vskip 1mm}

\begin{document}

\title{Time-Constrained Continuous Subgraph Matching Using Temporal Information for Filtering and Backtracking
\thanks{$^{*}$Contact author}
}

\author
    {
        \IEEEauthorblockN{Seunghwan Min$^{\dagger}$, Jihoon Jang$^{\dagger}$, Kunsoo Park$^{*\dagger}$, Dora Giammarresi$^{\ddagger}$, Giuseppe F. Italiano$^{\mathsection}$, Wook-Shin Han$^{\mathparagraph}$}
        \IEEEauthorblockA{
        $^{\dagger}$\textit{Seoul National University, South Korea} \hspace{3mm} $^{\ddagger}$\textit{Università Roma ``Tor Vergata''}\\
        $^{\mathsection}$\textit{LUISS University, Italy} \hspace{3mm} $^{\mathparagraph}$\textit{Pohang University of Science and Technology (POSTECH), South Korea}\\
        \texttt{$^{\dagger}$\{shmin,jhjang,kpark\}@theory.snu.ac.kr \hspace{3mm} $^{\ddagger}$giammarr@mat.uniroma2.it}\\
        \texttt{$^{\mathsection}$gitaliano@luiss.it \hspace{3mm} $^{\mathparagraph}$wshan@dblab.postech.ac.kr} 
        }
    }

\maketitle

\begin{abstract}
Real-time analysis of graphs containing temporal information, such as social media streams, Q\&A networks, and cyber data sources, plays an important role in various applications. Among them, detecting patterns is one of the fundamental graph analysis problems. In this paper, we study time-constrained continuous subgraph matching, which detects a pattern with a strict partial order on the edge set in real-time whenever a temporal data graph changes over time. We propose a new algorithm based on two novel techniques. First, we introduce a filtering technique called time-constrained matchable edge that uses temporal information for filtering with polynomial space. Second, we develop time-constrained pruning techniques that reduce the search space by pruning some of the parallel edges in backtracking, utilizing temporal information. Extensive experiments on real and synthetic datasets show that our approach outperforms the state-of-the-art algorithm by up to two orders of magnitude in terms of query processing time.
\end{abstract}

\begin{IEEEkeywords}
time-constrained continuous subgraph matching, temporal order, time-constrained matchable edge, max-min timestamp, time-constrained pruning
\end{IEEEkeywords}

\vspace*{-1.5mm}
\section{Introduction}\label{sec:introduction}
Graphs are structures widely used to represent relationships between objects. For example, communication between users on social media, financial transactions between bank accounts, and computer network traffic can be modeled as graphs. In many cases, relationships between objects in real-world graph datasets contain temporal information about when they occurred. A graph containing temporal information is called a temporal graph. Social media streams \cite{leprovost2012temporal, fan2012graph}, Q\&A networks \cite{bhat2015effects}, and cyber data sources \cite{joslyn2013massive, haslhofer2016bitcoin} such as computer network traffic and financial transaction networks, are examples of temporal graphs.

There has been extensive research on the efficient analysis of temporal graphs such as graph mining \cite{sun2019tm, han2014chronos}, graph simulation \cite{song2014event, ma2020graph}, motif counting \cite{kovanen2011temporal, liu2019sampling}, subgraph matching \cite{redmond2013temporal, redmond2016subgraph, li2021subgraph}, and network clustering \cite{crawford2018cluenet}. In this paper, we model a temporal data graph as a streaming graph \cite{besta2020practice} and address continuous subgraph matching on temporal graphs called \textit{time-constrained continuous subgraph matching} \cite{li2019time}. Given a temporal data graph $G$ and a temporal query graph $q$, the time-constrained continuous subgraph matching problem is to find all matches of $q$ that are isomorphic to $q$ and satisfy the temporal order of $q$ over the streaming graph of $G$. By considering the temporal order along with the topological structure, one can more clearly represent various real-world scenarios, such as tracking the flow of money in financial transaction networks and monitoring the flow of packets in computer network traffic. Thus, time-constrained continuous subgraph matching is a fundamental problem that can play an important role in applications such as money laundering detection and cyber attack detection \cite{li2019time}.

\begin{figure}
\centering
    \includegraphics[width=0.5\linewidth]{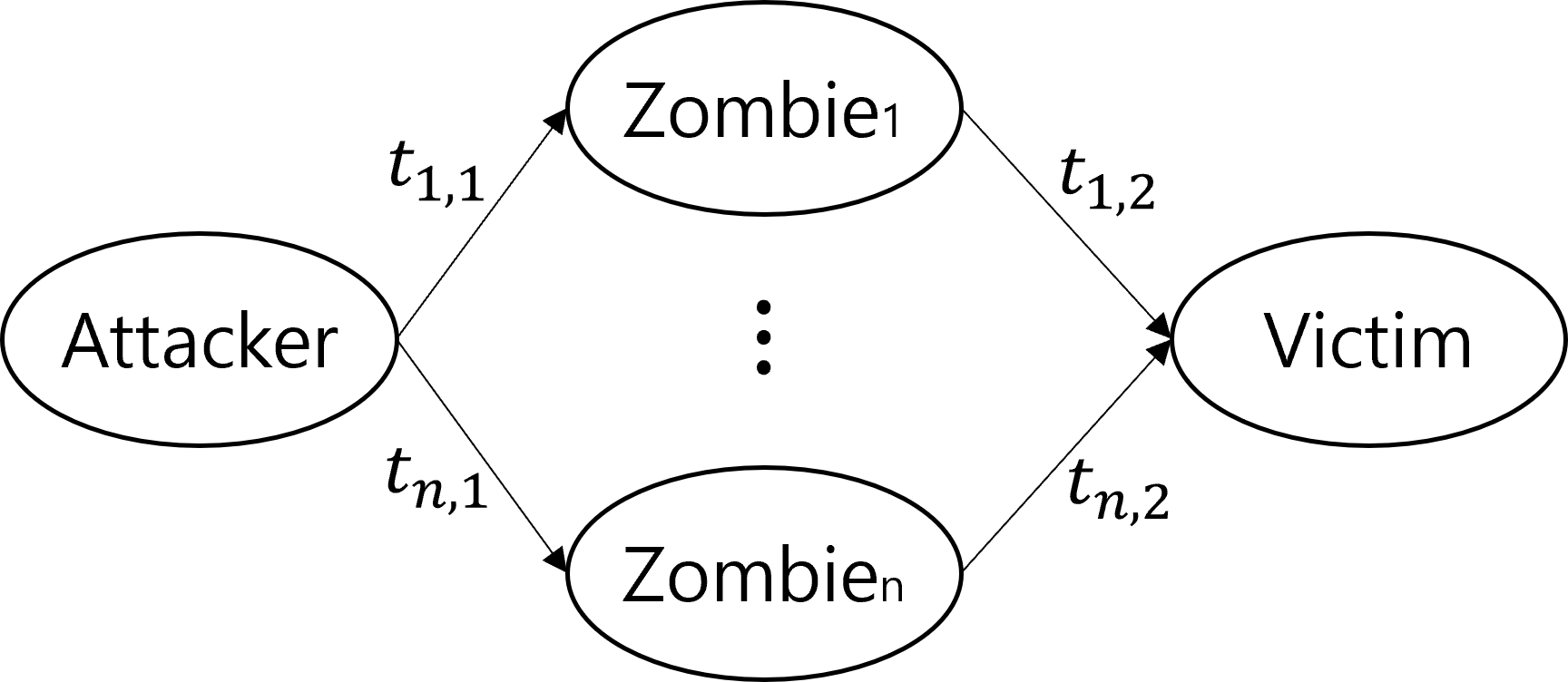}
\vspace*{-1mm}
\caption{DDoS attack pattern in network traffic}
\label{fig:ddos_attack}
\vspace*{-4mm}
\end{figure}

\begin{figure*}[t]
\centering
    \subcaptionbox{Temporal data graph $G$ where edge $\sigma_i$ arrives at time $i$\label{fig:temporal_data_graph_g}}{
        \includegraphics[width=0.22\linewidth,scale=0.2]{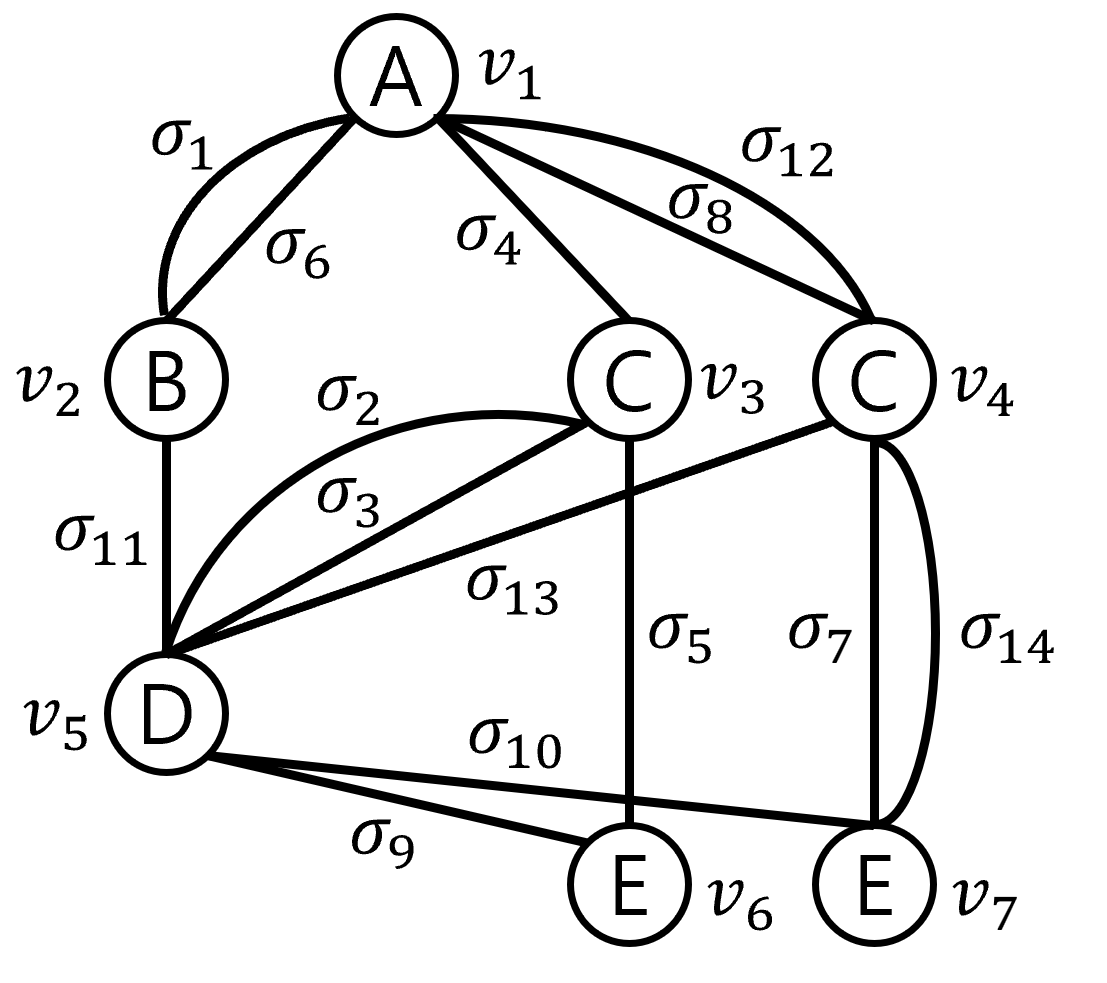}
    }
    \subcaptionbox{$G$ at $t=14$ with window $\delta=10$\label{fig:snapshot_g14}}{
        \includegraphics[width=0.22\linewidth,scale=0.2]{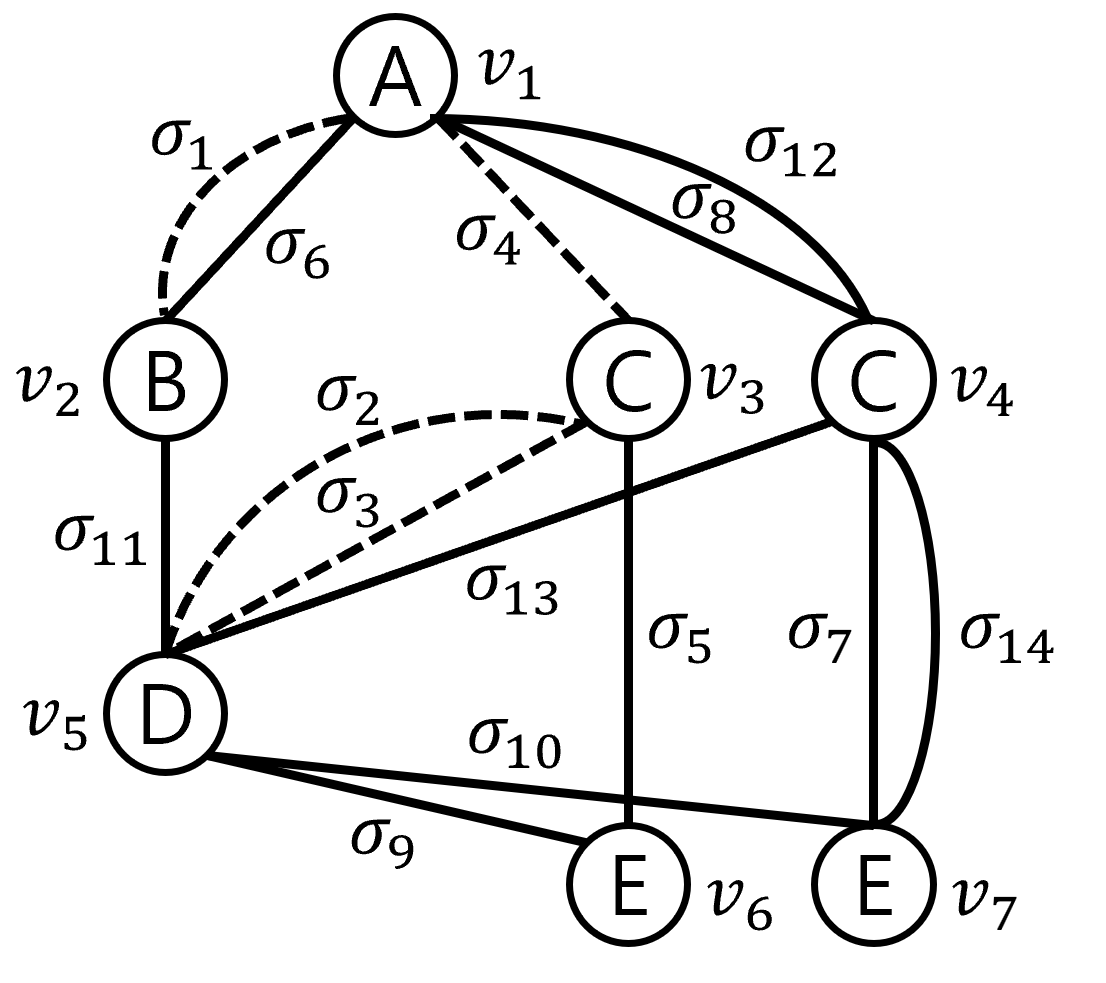}
    }
    \subcaptionbox{Query graph $q$ with temporal order constraints\label{fig:temporal_query_graph_q}}{
        \includegraphics[width=0.26\linewidth,scale=0.2]{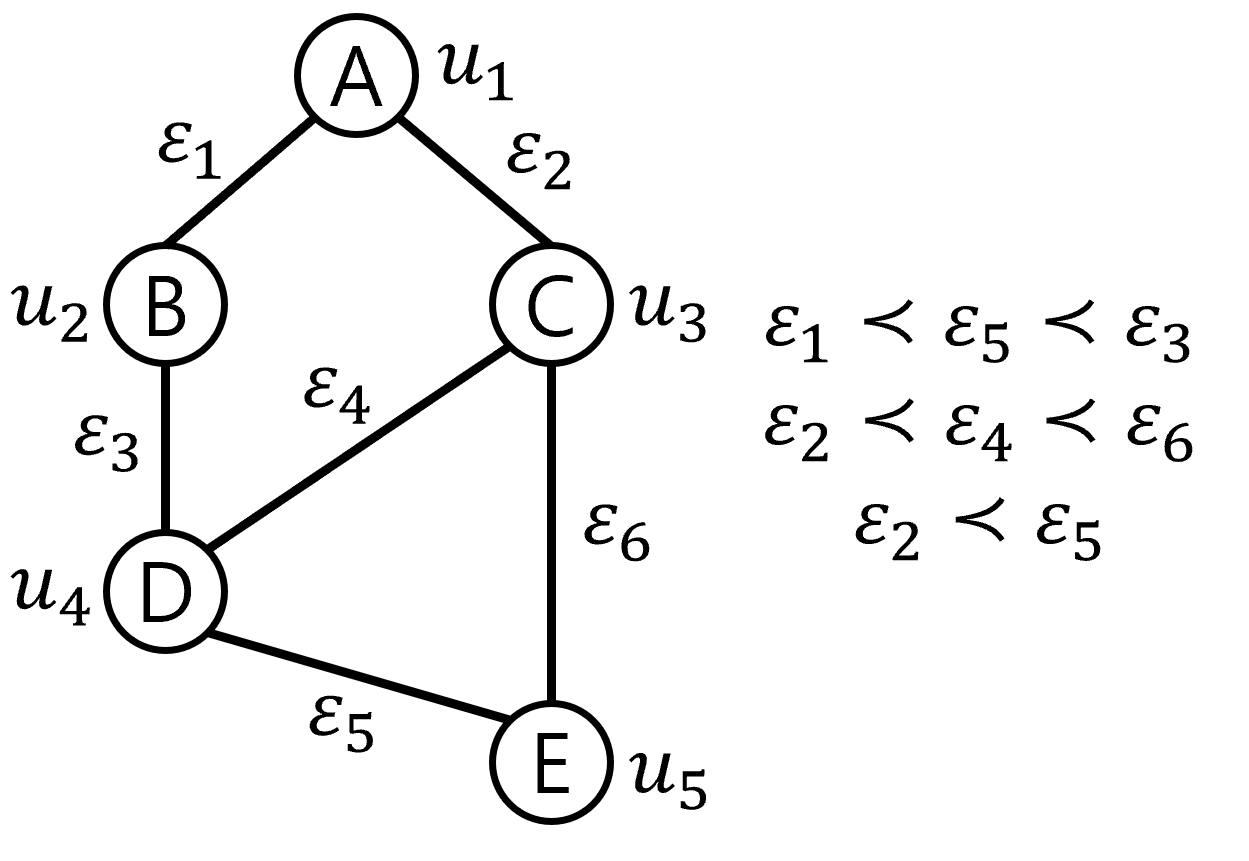}
    }
    \vspace*{-1mm}
\caption{A running example}
\label{fig:running_example}
\vspace*{-5mm}
\end{figure*}

The query depicted in Figure \ref{fig:ddos_attack} illustrates the essential pattern in DDoS attacks. Each zombie computer ($\text{Zombie}_i$) is infected through one of multiple contamination paths, which is not shown in the query graph. The zombies receive commands from the attacker ($t_{i,1}$), and then they attack the victim ($t_{i,2}$). Thus there is a temporal order $t_{i,1}\prec t_{i,2}$ for each $i$. 
Real-world DDoS attacks can be more complex than the query pattern in Figure 1, but they include the query pattern as a subgraph because the query pattern is the core of DDoS attacks. Hence, by detecting and recognizing the query pattern in network traffic data, one can identify real-world DDoS attacks that include the query pattern as a subgraph (note that one can identify the attacker by using this query pattern!). US communications company Verizon has analyzed 100,000 security incidents from the past decade that reveal that 90\% of the incidents fall into ten attack patterns \cite{verizon2020}, which can be described as graph patterns.

Many researchers have studied the continuous subgraph matching problem on non-temporal graphs \cite{fan2013incremental, kankanamge2017graphflow, choudhury2015selectivity, kim2018turboflux, min2021symmetric}. 
It is possible to solve the time-constrained continuous subgraph matching problem with these algorithms. However, these algorithms also find matches that do not satisfy the temporal order, which we don't need. Therefore, we have to remove unnecessary matches at the end, which is inefficient.
Recently, several studies have been conducted to find matches by taking the temporal order into account \cite{sun2017hasse, li2019time}. However, existing work shows limitations in that it requires exponential space to store all partial matches or utilize only temporal order constraints between incident edges.

A temporal graph typically has many parallel edges because each edge has temporal information. In the case of computer network traffic or financial transaction networks, there are many edges (i.e., data transmissions or transaction records) between two nodes. 
Multigraphs (and thus algorithms for multigraphs) are valuable in such applications due to their unique feature of parallel edges, which sets them apart from simple graphs \cite{papalexakis2013more, wang2017approximately, shafie2015multigraph}.
In this paper, we propose a new time-constrained continuous subgraph matching algorithm \texttt{TCM} for temporal graphs that adopts two novel techniques which fully exploit temporal information in both filtering and backtracking.
\begin{itemize}
    \item 
     To address the limitations of existing approaches, we propose an efficient filtering technique named \textit{time-constrained matchable edge} that can make full use of temporal relationships between non-incident edges for filtering. Our filtering approach is based on the basic idea that in order to match a query edge $e$ with a data edge $\overline{e}$, there must exist a path starting from $\overline{e}$ that satisfies the temporal order of $e$ along every possible path starting from $e$ in the query graph. Consider the process of applying the previous idea to the temporal data graph $G$ in Figure \ref{fig:temporal_data_graph_g} and the temporal query graph $q$ in Figure \ref{fig:temporal_query_graph_q}. The ID of each edge in $G$ indicates the arrival time of the edge (i.e., edge $\sigma_{i}$ arrives at time $i$). Since there is no path starting from $\sigma_{4}$ in Figure \ref{fig:temporal_data_graph_g} satisfying the temporal order constraint $\varepsilon_{2}\prec\varepsilon_{4}$ for the path $\varepsilon_{2}\rightarrow\varepsilon_{4}$ in Figure \ref{fig:temporal_query_graph_q}, we can safely exclude $\sigma_{4}$ from the matching candidates of $\varepsilon_{2}$. 
    
    We introduce a data structure called \textit{max-min timestamp} with polynomial space, which can determine whether an edge is filtered by the time-constrained matchable edge and can be updated efficiently for each graph update. For example, consider the situation of determining whether there is a path starting from $\sigma_{1}$ that satisfies the temporal order of $\varepsilon_{1}$ ($\varepsilon_{1}\prec\varepsilon_{5}$ and $\varepsilon_{1}\prec\varepsilon_{3}$) for the path of $\varepsilon_{1}\rightarrow\varepsilon_{3}\rightarrow\varepsilon_{5}$. The max-min timestamp for $\varepsilon_{1}$ at $\sigma_{1}$ stores the maximum value among the minimum timestamps of the corresponding edges following $\varepsilon_{1}$ (i.e., $\varepsilon_{3}$ and $\varepsilon_{5}$) from the various paths starting from $\sigma_{1}$.
    Since the minimum values of timestamps in two paths $\sigma_{11}\rightarrow\sigma_{9}$ and $\sigma_{11}\rightarrow\sigma_{10}$ (corresponding to the path $\varepsilon_{3}\rightarrow\varepsilon_{5}$) are 9 and 10, respectively, the max-min timestamp for $\varepsilon_{1}$ at $\sigma_{1}$ stores the maximum value of 10. We can determine that there is a path starting from $\sigma_{1}$ that satisfies the temporal order of $\varepsilon_{1}$ for the path of $\varepsilon_{1}\rightarrow\varepsilon_{3}\rightarrow\varepsilon_{5}$ because the max-min timestamp for $\varepsilon_{1}$ at $\sigma_{1}$ is greater than 1. We also introduce how to update the max-min timestamp efficiently in Section \ref{subsec:computing_time-constrained_matchable_edge}.
    
    \item 
    Second, we develop a set of time-constrained pruning techniques to reduce the search space in backtracking. Unlike non-temporal graphs, parallel edges between two vertices in a temporal graph are distinguished because they have different timestamps. We can prune some of the parallel edges by utilizing temporal relationships. 
    When finding a match for query graph $q$ (Figure \ref{fig:temporal_query_graph_q}) in the data graph $G$ (Figure \ref{fig:temporal_data_graph_g}) that satisfies the temporal order of $q$, assume that we try to match query edges $\varepsilon_{5}$, $\varepsilon_{6}$, $\varepsilon_{4}$, $\varepsilon_{3}$, $\varepsilon_{1}$, and $\varepsilon_{2}$ in this order.
    Now, consider a situation where $\varepsilon_{5}$ and $\varepsilon_{6}$ are matched to $\sigma_{9}$ and $\sigma_{5}$, respectively, and $\varepsilon_{4}$ is the next edge to be matched to either $\sigma_{2}$ or $\sigma_{3}$.
    Given that $\varepsilon_{4}\prec\varepsilon_{6}$ holds true regardless of whether $\sigma_{2}$ or $\sigma_{3}$ is chosen, the temporal order constraint that needs to be satisfied for $\varepsilon_{4}$ is reduced to $\varepsilon_{2}\prec\varepsilon_{4}$. To increase the likelihood of finding the desired match, it is more advantageous to match $\varepsilon_{4}$ to an edge with a larger timestamp. Consequently, we conduct backtracking in the direction of matching $\varepsilon_{4}$ with $\sigma_{3}$ prior to matching it with $\sigma_{2}$. The backtracking process ends without finding the desired match due to the absence of any edge that can be matched with $\varepsilon_{2}$. When we retrace our steps to the point at which we have to match $\varepsilon_{4}$ to $\sigma_{2}$ instead of $\sigma_{3}$, we can deduce from the previously unsuccessful backtracking attempt that any further backtracking will not yield the desired match. Therefore, we prune the search space that matches $\varepsilon_{4}$ to $\sigma_{2}$ and explore other search spaces.
    
    We classify the temporal relationship between the query edge $e$ and other query edges into three cases: no temporally related edges, all edges with the same temporal relationship, and the remaining case.
    For each case, we demonstrate a situation where a match cannot exist and use it to prune edges.
\end{itemize}

We conduct experiments on six real and synthetic datasets comparing our algorithm with existing algorithms. Experimental results show that \texttt{TCM} outperforms existing approaches by up to two orders of magnitude in terms of query processing time.

The rest of the paper is organized as follows. Section \ref{sec:preliminaries} defines the problem statement and provides related work. Section \ref{sec:overview} gives an overview of our algorithm. Section \ref{sec:filtering_by_temporal_order_constraint} introduces our filtering technique and Section \ref{sec:backtracking} presents techniques to prune out a part of search space. Section \ref{sec:performance} shows the results of our performance evaluation. Finally, Section \ref{sec:conclusion} concludes the paper. \icdepaper{Proofs and an additional example in Section \ref{sec:backtracking} are provided in the full version \cite{fullversion}.}

\section{Preliminaries}\label{sec:preliminaries}

In this paper, we focus on undirected and vertex-labeled graphs. Our techniques can be easily extended to directed graphs with multiple labels on vertices or edges.

\begin{MyDefinition}
\cite{paranjape2017motifs, li2021subgraph} A \textit{temporal graph} $G=(V(G), \allowbreak E(G),  L_G, T_G)$ consists of a set $V(G)$ of vertices, a set $E(G)$ of edges, a labeling function $L_G:V(G)\rightarrow \Sigma$ that assigns a label to each vertex from the set $\Sigma$ of labels, and a timing function $T_G:E(G)\rightarrow \mathbb{N}$ that assigns a timestamp to each edge. The timestamp of an edge indicates when the edge arrived. We represent timestamps as natural numbers. 
\end{MyDefinition}

A temporal graph can have parallel edges with different time\-stamps between two vertices. To distinguish edges with different timestamps between vertices $u$ and $v$, we denote each edge as $(u,v,t)$, where $t$ is the timestamp of the edge. Note that there are two edges $\sigma_1=(v_1,v_2,1)$ and $\sigma_6=(v_1,v_2,6)$ between $v_1$ and $v_2$ in Figure \ref{fig:temporal_data_graph_g}.

\begin{MyDefinition}
A \textit{temporal query graph} $q$ is defined as $(V(q), E(q), \allowbreak L_q, \prec )$ where $(V(q), E(q), L_q)$ is a graph and $\prec$ is a strict partial order relation on $E(q)$ called the \textit{temporal order}. We say that $e_1$ and $e_2$ are temporally related when $e_1\prec e_2$ or $e_2\prec e_1$.
\end{MyDefinition}

\begin{MyDefinition}
Given a temporal query graph $q=(V(q), \allowbreak E(q), L_q,  \prec )$ and a temporal data graph $G=(V(G), \allowbreak E(G), L_G, T_G)$, a \textit{time-constrained embedding} of $q$ in $G$ is a mapping $M:V(q)\cup E(q)\rightarrow V(G)\cup E(G)$ such that (1) $M$ is injective, (2) $L_q(u)=L_G(M(u))$ for every $u\in V(q)$, (3) $M(e)\in E(G)$ is an edge between $M(u)$ and $M(u')$ for every $e=(u,u')\in E(q)$ (i.e., $M(e)=(M(u),M(u'),t)$), and (4) $e_1 \prec e_2 \Rightarrow T_G(M(e_1))\allowbreak < T_G(M(e_2))$ for any two $e_1 , e_2\in E(q)$.
\end{MyDefinition}

A mapping that satisfies (2) and (3) is called a \textit{homomorphism}. A homomorphism satisfying (1) is called an \textit{embedding}. 
Homomorphism and embedding are widely used in graph matching problems. In this paper, we use embeddings as matching semantics.

\begin{MyExample}
In a temporal data graph $G$ in Figure \ref{fig:temporal_data_graph_g} and a temporal query graph $q$ in Figure \ref{fig:temporal_query_graph_q}, two mappings $\{(\varepsilon_1,\sigma_1),(\varepsilon_2,\sigma_8),(\varepsilon_3,\sigma_{11}),(\varepsilon_4,\sigma_{13}),(\varepsilon_5,\sigma_{10}),(\varepsilon_6,\sigma_{14})\}$~and $\{(\varepsilon_1,\sigma_6),(\varepsilon_2,\sigma_8),(\varepsilon_3,\sigma_{11}),(\varepsilon_4,\sigma_{13}),(\varepsilon_5,\sigma_{10}),(\varepsilon_6,\sigma_{14})\}$~are embeddings and also time-constrained embeddings of $q$ in $G$. 
In contrast, a mapping $M=\{(\varepsilon_1,\sigma_1),(\varepsilon_2,\sigma_4), (\varepsilon_3,\sigma_{11}),\allowbreak (\varepsilon_4,\sigma_{2}),(\varepsilon_5,\sigma_{9}),(\varepsilon_6,\sigma_{5})\}$ is an embedding of $q$ in $G$, but it is not a time-constrained embedding because $\varepsilon _2 \prec \varepsilon _4$ in $q$ but $T_{G}(M(\varepsilon_2))=T_{G}(\sigma_4)=4 \nless T_{G}(M(\varepsilon_4))=T_{G}(\sigma_2)=2$ in $G$. When representing the mapping $M$, we omit a pair of a vertex $u\in V(q)$ and its image $M(u)$ for simplicity of notation.
\end{MyExample}

In order not to find time-constrained embeddings where the time interval between the minimum and maximum timestamps of its edges is too long, we model a temporal graph as a streaming graph with a time window as in \cite{li2019time}.
For a given time window $\delta$ and current time $t$, the edges with timestamp less than or equal to $t-\delta$ expire. In other words, we keep only the edges that arrive between $(t-\delta,t]$ and find time-constrained embeddings over these edges.

\begin{MyExample}
Figure \ref{fig:snapshot_g14} shows the temporal graph at time $t=14$ with the time window $\delta=10$. At time $t=14$, the edge $\sigma_{14}$ arrives and the edge $\sigma_4$ expires. When $\sigma_{14}$ arrives, a time-constrained embedding $\{(\varepsilon_1,\sigma_6),(\varepsilon_2,\sigma_8),(\varepsilon_3,\sigma_{11}),(\varepsilon_4,\allowbreak\sigma_{13}),(\varepsilon_5,\sigma_{10}),(\varepsilon_6,\sigma_{14})\}$ occurs. In this case, since the edge $\sigma_1$ has already expired, a time-constrained embedding $\{(\varepsilon_1,\sigma_1),(\varepsilon_2,\sigma_8),(\varepsilon_3,\allowbreak\sigma_{11}),(\varepsilon_4,\allowbreak\sigma_{13}),(\varepsilon_5,\sigma_{10}),\allowbreak(\varepsilon_6,\sigma_{14})\}$ does not occur. At time $t=16$, the edge $\sigma_6$ expires and the time-constrained embedding containing $\sigma_6$ also expires.
\end{MyExample}

\noindent\textbf{Problem Statement.} Given a temporal data graph $G$, a temporal query graph $q$, and a time window $\delta$, the \textit{time-constrained continuous subgraph matching problem} is to find all time-constrained embeddings of $q$ that occur or expire with the time window $\delta$ according to the arrival or expiration of each edge in $G$.

\begin{MyTheorem}
\cite{li2019time} The time-constrained continuous subgraph matching problem is NP-hard.
\end{MyTheorem}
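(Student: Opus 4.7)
The plan is to reduce from classical subgraph isomorphism on vertex-labeled undirected graphs, which is NP-complete. The decision version of the time-constrained continuous subgraph matching problem asks whether at least one time-constrained embedding is reported at some update event, and I would show how to encode any subgraph isomorphism instance as such an instance in polynomial time.

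Given a pattern graph $H$ and a host graph $G_0$ with $m = |E(G_0)|$, I would order the edges of $G_0$ arbitrarily as $e_1,\ldots,e_m$ and build a temporal data graph $G$ by setting $T_G(e_i) = i$, inheriting vertex labels from $G_0$. The temporal query graph $q$ would be $H$ equipped with the empty temporal order $\prec = \emptyset$, which is vacuously a strict partial order on $E(q)$. Choosing a window size $\delta \geq m$ guarantees that every edge of $G$ is simultaneously alive at time $t = m$, so it suffices to inspect the output at the arrival event of $e_m$. Correctness then follows by unpacking the definitions: because $\prec$ is empty, condition (4) of a time-constrained embedding is vacuous, so time-constrained embeddings of $q$ in $G$ coincide with embeddings of $H$ in $G_0$; and because $\delta \geq m$, no edge has expired at time $m$, so the streaming snapshot recovers all of $G_0$.

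The main obstacle is conceptual rather than technical: one must verify that the continuous, streaming formulation, with incremental arrivals, expirations, and windowing, does not make the problem strictly easier than its static counterpart. Focusing on the single update event at time $m$, where no edges have yet expired and all are present, eliminates this concern, since the incremental answer at that moment must contain every time-constrained embedding of $q$ currently witnessed in $G$. Minor variants of the setting (directed edges, multi-labeled vertices, or a requirement that $\prec$ be nonempty, which one could handle by reducing from Hamiltonian path instead) require only cosmetic modifications, so the hardness is robust.
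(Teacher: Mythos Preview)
The paper does not supply its own proof of this theorem; it simply cites \cite{li2019time} and moves on. There is therefore nothing in the paper to compare your argument against.

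Your reduction from subgraph isomorphism via the empty temporal order is the standard route and is essentially correct. One small imprecision: you write that ``it suffices to inspect the output at the arrival event of $e_m$,'' but the continuous problem only reports \emph{new} embeddings at each event, so an embedding of $H$ in $G_0$ that happens not to use the edge you numbered $e_m$ would have been reported at an earlier arrival, not at time $m$. This does not affect the reduction, since your stated decision version already quantifies over \emph{all} update events; just drop the claim that the single event at $t=m$ suffices, or else observe that $H$ embeds in $G_0$ iff the union of all reported embeddings over times $1,\ldots,m$ is nonempty. With that wording fixed, the argument is complete.
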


One naive solution to this problem is to solve the continuous subgraph matching (i.e., find embeddings without considering the temporal order) and then exclude the embeddings that do not satisfy the temporal order to obtain the time-constrained embeddings. In general, it would be better to filter the edges in consideration of the temporal order or prune the partial embeddings that do not fit the constraint rather than the naive solution. 

\begin{figure}[t]
\centering
    \subcaptionbox{Query DAG $\hat{q}$\label{fig:DAG_q}}{
        \includegraphics[width=0.26\linewidth]{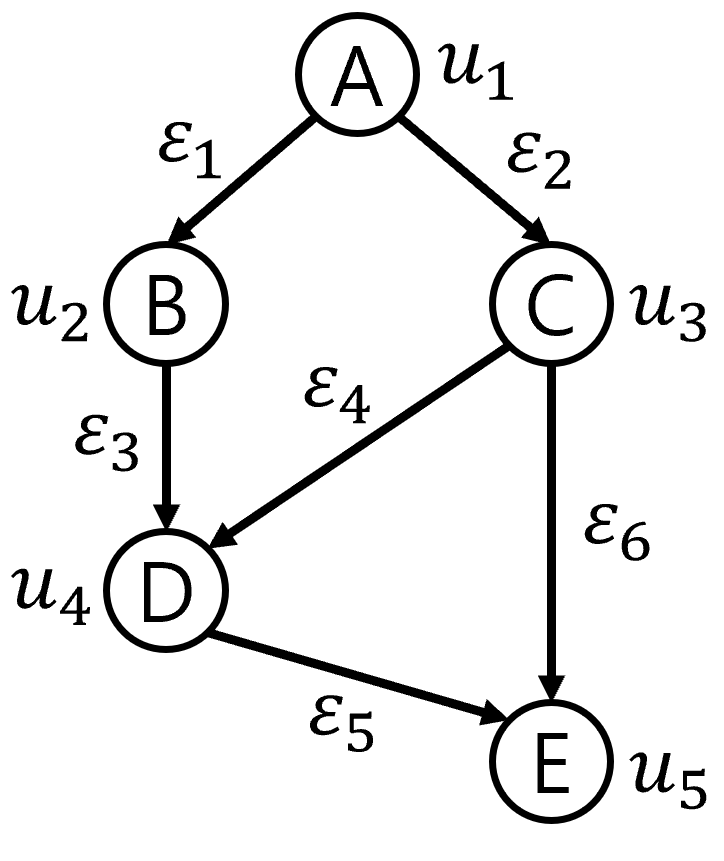}
    }
    \subcaptionbox{Query DAG $\hat{q}^{-1}$\label{fig:DAG_q-1}}{
        \includegraphics[width=0.26\linewidth]{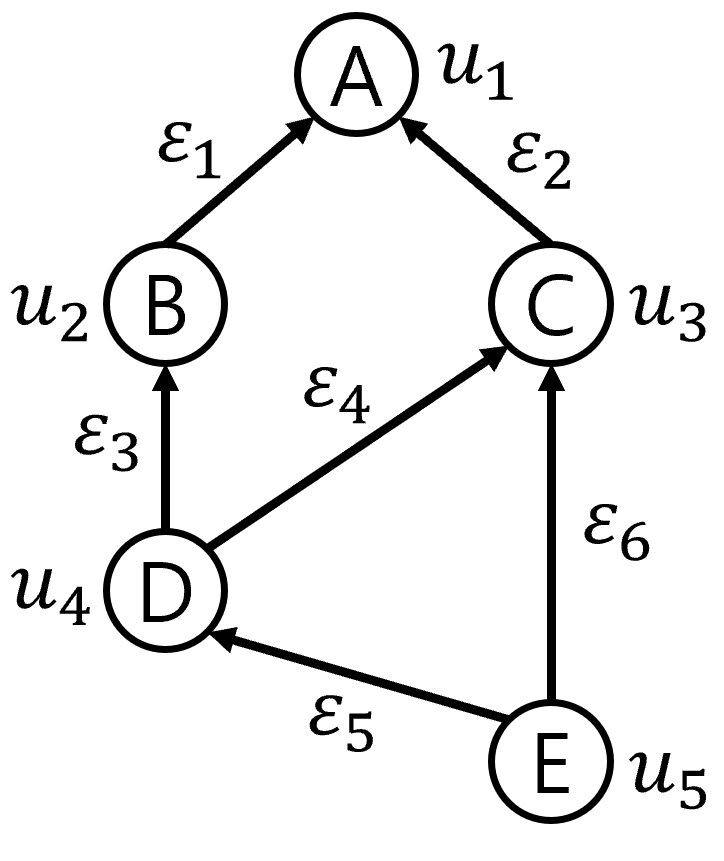}
    }
    \subcaptionbox{Path tree of $\hat{q}$\label{fig:path_tree}}{
        \includegraphics[width=0.30\linewidth]{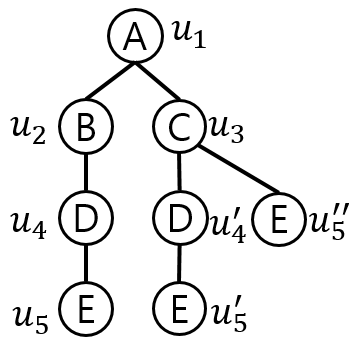}
    }
    \vspace*{-1mm}
\caption{{DAGs of $q$ in Figure \ref{fig:temporal_query_graph_q} for the running example}}
\label{fig:DAG_path_tree}
\vspace*{-6mm}
\end{figure}

In this paper, we will use a directed acyclic graph (DAG) and a weak embedding \cite{han2019efficient} as a tool to filter edges using temporal order (Section \ref{sec:filtering_by_temporal_order_constraint}). A \textit{directed acyclic graph} (DAG) $\hat{q}$ is a directed graph that contains no cycles. A \textit{root} (resp., \textit{leaf}) of a DAG is a vertex with no incoming (resp., outgoing) edges. A DAG $\hat{q}$ is a \textit{rooted DAG} if there is only one root. The DAG $\hat{q}$ in Figure \ref{fig:DAG_q} is one of the rooted DAGs that can be obtained from $q$ in Figure \ref{fig:temporal_query_graph_q}.
Its reverse $\hat{q}^{-1}$ in Figure \ref{fig:DAG_q-1} is the same as $\hat{q}$ except that all of the edges are reversed. 
\deleted{The \textit{height} of a rooted DAG $\hat{q}$ is the maximum distance between the root and any other vertex in $\hat{q}$, where the distance between two vertices is the number of edges in a shortest path connecting them.}

In a DAG, we say that $u$ is a \textit{parent} of $v$ ($v$ is a \textit{child} of $u$) if there exists a directed edge from $u$ to $v$. An \textit{ancestor} of a vertex $v$ is a vertex which is either a parent of $v$ or an ancestor of a parent of $v$. A \textit{descendant} of a vertex $v$ is a vertex which is either a child of $v$ or a descendant of a child of $v$. When representing an edge of a DAG as a pair $(u_1,u_2)$ of vertices, we will use the convention that the first vertex $u_1$ is always the parent of the second vertex $u_2$.
For edges, we say that $e_1=(u_1,u_1')$ is an ancestor of $e_2=(u_2,u_2')$ if $u_1'=u_2$ or $u_1'$ is an ancestor of $u_2$. For example, $\varepsilon_2$ is an ancestor of $\varepsilon_4$, $\varepsilon_5$, and $\varepsilon_6$ in Figure \ref{fig:DAG_q}.
Let $\textrm{Child}(u)$ and $\textrm{Parent}(u)$ denote the children and parents of $u$ in $\hat{q}$, respectively.

\begin{MyDefinition}\label{def:timing_ancestor}
Given a temporal query graph $q$ and its DAG $\hat{q}$, $e_1$ is a \textit{temporal ancestor} of $e_2$ in $\hat{q}$, denoted by $e_1\leadsto_{\hat{q}}e_2$ if $e_1$ is an ancestor of $e_2$ in $\hat{q}$ and two edges have a temporal relation (i.e., $e_1\prec e_2$ or $e_2\prec e_1$). We say that such $e_2$ is a \textit{temporal descendant} of $e_1$. 
\end{MyDefinition}

For simplicity of presentation, when $e_1$ is a temporal ancestor of $e_2$ in $\hat{q}$, we will only consider the case $e_1 \prec e_2$ (i.e., we omit the explanation for the case $e_2 \prec e_1$). The other case (i.e., $e_2 \prec e_1$) has been implemented in a symmetrical way.

\begin{MyDefinition}\label{def:sub-DAG}
A \textit{sub-DAG of $\hat{q}$ starting at $u$ or} $e$, denoted by $\hat{q}_{u}$ or $\hat{q}_{e}$, is the subgraph of $\hat{q}$ consisting of edges of all paths starting at $u$ or $e$. For example, $\hat{q}_{u_3}$ in Figure~\ref{fig:DAG_q} is the subgraph that contains $\varepsilon_4$, $\varepsilon_5$ and $\varepsilon_6$, and $\hat{q}_{\varepsilon_2}$ in Figure \ref{fig:DAG_q} consists of $\varepsilon_2,\varepsilon_4,\varepsilon_5$ and $\varepsilon_6$. 
\end{MyDefinition}

\begin{MyDefinition}\label{def:path_tree}
\cite{han2019efficient} The \textit{path tree} of a rooted DAG $\hat{q}$ is defined as the tree $\hat{q}^T$ such that each root-to-leaf path in $\hat{q}^T$ corresponds to a distinct root-to-leaf path in $\hat{q}$, and $\hat{q}^T$ shares common prefixes of its root-to-leaf paths. Figure \ref{fig:path_tree} shows the path tree of $\hat{q}$ in Figure \ref{fig:DAG_q}. 
\end{MyDefinition}

\begin{MyDefinition}\label{def:weak_embedding}
\cite{han2019efficient} A \textit{weak embedding} $M'$ of $\hat{q}_u$ at $v\in V(G)$ is defined as a homomorphism of the path tree of $\hat{q}_{u}$ in $G$ such that $M'(u)=v$. Similarly, a weak embedding $M'$ of $\hat{q}_e$ at $\overline{e}\in E(G)$ is defined as a homomorphism of the path tree of $\hat{q}_e$ in $G$ such that $M'(e)=\overline{e}$.
\end{MyDefinition}

Table \ref{tab:notations} lists the notations frequently used in this paper.

\begin{table}[h]
    \centering
    \caption{Frequently used notations}
    \vspace*{-1mm}
    \begin{tabular}{cl}
        \toprule
        Symbol & \multicolumn{1}{c}{Description} \\
        \midrule
        $G$ & Temporal data graph \\
        $q$ & Temporal query graph \\
        $\hat{q}$ & Query DAG \\
        $M(u)$ and $M(e)$ & Mapping of $u$ and $e$ in (partial) embedding $M$\\
        $\mathcal{T}(\hat{q})$ & Max-min timestamps of $\hat{q}$\\
        $EC_M(e)$ & Set of candidate edges of $e$ regarding a partial\\ &  time-constrained embedding $M$\\
        \bottomrule
    \end{tabular}
    \label{tab:notations}
    \vspace{-4mm}
\end{table}

\subsection{Related Work}\label{subsec:related_work}

\noindent\textbf{Subgraph matching.}
Subgraph matching finds all embeddings of a query graph over the static data graph. Ullmann~\cite{ullmann1976algorithm} first proposes a backtracking algorithm to find all embeddings of the query graph. Extensive studies have been done to improve the backtracking algorithm, such as \texttt{Turbo\textsubscript{iso}}~\cite{han2013turboiso}, \texttt{CFL-Match}~\cite{bi2016efficient}, \texttt{DAF}~\cite{han2019efficient} and \texttt{VEQ}~\cite{kim2021versatile}. \deleted{\texttt{Turbo\textsubscript{iso}}\deleted{~\cite{han2013turboiso}} and \texttt{CFL-Match}\deleted{~\cite{bi2016efficient}} use a spanning tree of $q$ to decide effective matching order and filter out some candidate vertices that cannot match query vertices. Instead of using the spanning tree of $q$, \texttt{DAF}\deleted{~\cite{han2019efficient}} \added{ and \texttt{VEQ}} uses a query DAG $\hat{q}$ of $q$ to achieve better filtering power and matching order. \deleted{It builds an auxiliary data structure called \textit{candidate space}, or \texttt{CS}, to filter the candidate vertices by conducting the dynamic programming between $\hat{q}$ and $G$. \texttt{VEQ}~\cite{kim2021versatile} extends \texttt{DAF} by adding a filtering technique called \textit{neighbor-safety} and improving the matching order by considering \textit{static equivalence} of query vertices. Additionally, it prunes some symmetric search space by considering \textit{dynamic equivalence} on \texttt{CS} while backtracking.}}

\deleted{\noindent\textbf{Continuous subgraph matching. }}
\deleted{Given a query graph $q$, an initial data graph $G_0$, and a graph update stream $\Delta G$, continuous subgraph matching finds all newly occurred or expired embeddings of $q$ for each inserted or deleted edge of $\Delta G$.} 
Continuous subgraph matching finds all newly occurred or expired embeddings of the query graph over the dynamic data graph. There also has been extensive studies on continuous subgraph matching problem, such as \texttt{IncIsoMatch}~\cite{fan2013incremental}, 
 \texttt{Graphflow}~\cite{kankanamge2017graphflow}, \texttt{SJ-tree}~\cite{choudhury2015selectivity}, \texttt{TurboFlux}~\cite{kim2018turboflux}, \texttt{SymBi}~\cite{min2021symmetric} and \texttt{RapidFlow} \cite{sun2022rapidflow}.
\deleted{There also has been extensive studies on continuous subgraph matching problem, which finds all newly occurred or expired embeddings of the query graph over the dynamic data graph. }
\deleted{
\texttt{IncIsoMatch}~\cite{fan2013incremental} and \texttt{Graphflow}~\cite{kankanamge2017graphflow} computes the updated embeddings without index structures updated incrementally. \texttt{SJ-tree}~\cite{choudhury2015selectivity} uses an index structure called \texttt{SJ-tree}, where the root node stores all embeddings of $q$ while the other nodes store partial embeddings. Since its space usage can be exponential,} 
\texttt{TurboFlux}\deleted{~\cite{kim2018turboflux}} proposes an auxiliary data structure called \textit{data-centric graph}, or \texttt{DCG}, which uses a spanning tree of $q$ to filter out candidate edges that can match each edge of the spanning tree. \texttt{SymBi}\deleted{~\cite{min2021symmetric}} proposes an auxiliary data structure called \textit{dynamic candidate space}, or \texttt{DCS}, which uses a query DAG of $q$ instead of the spanning tree of $q$. Since it uses a query DAG, it can filter out more candidate edges by considering non-tree edges of the spanning tree of $q$, while achieving polynomial space. \texttt{RapidFlow}\deleted{\cite{sun2022rapidflow}} addresses and solves the following two issues in existing works: matching orders always starting from the inserted edge can lead to inefficiency, and automorphism on query graph causes duplicate computation. \deleted{An in-depth study \cite{sun2022depth} investigates various recent continuous subgraph matching algorithms and shows which algorithm's index and matching order have strengths according to the characteristics of a data graph and a query graph through extensive experiments.}

\noindent\textbf{Time-constrained subgraph matching.}
There are several studies to solve subgraph matching problem on temporal graphs, such as \texttt{TOM}\cite{li2021subgraph}, \texttt{Timing}\cite{li2019time} and \texttt{Hasse} \cite{sun2017hasse}. \par

\texttt{TOM} finds all the time-constrained embeddings within a time window over the static temporal graph. It builds an index structure called \textit{Temporal-order tree} (TO-tree for short) to filter the candidate edges that can be matched to the query edge, by checking adjacency and temporal order between matching candidates. After the TO-tree construction, it enumerates all embeddings of the query graph in an edge-by-edge expansion manner. It does not fully utilize the temporal order constraints while filtering because it does not consider the constraints between non-incident edges. 
\par

\texttt{Timing} and \texttt{Hasse} solves the time-constrained continuous subgraph matching problem. They decompose a query graph into the subqueries, and store all partial embeddings of each subquery in its index structure.
When a partial embedding occurs or expires due to the edge arrival or expiration, they compute new or expired embeddings of the query graph by joining the partial embeddings. Since they materialize the partial embeddings, they may require exponential space to the size of the query graph.

\deleted{
\texttt{Hasse} also solves time-constrained continuous subgraph matching. It decomposes a query graph into subqueries called \textit{Hasse-diagram based graph}, in which a node and an edge of the graph represent a query edge and a direct partial order between query edges, respectively. It stores all partial embeddings of each subquery in an index structure called \textit{Hasse-cache structure}.
When a new edge arrives, it updates the partial embeddings. If new partial embeddings occur, it computes the embeddings of the query graph using the partial embeddings. Like \texttt{Timing}, its index structure may require exponential space.}

\section{Overview of our algorithm}\label{sec:overview}

Algorithm \ref{alg:overview} shows the overview of our time-constrained continuous subgraph matching algorithm. Given a temporal data graph $G$, a temporal query graph $q$, and a time window $\delta$, our algorithm consists of the following steps.

    \setlinepenalty{200}
    
    \noindent\textbf{Building a query DAG.} We first build a rooted DAG $\hat{q}$ from $q$ by assigning directions to the edges in $q$. The built query DAG $\hat{q}$ is used to filter edges according to the definition of ``time-constrained matchable edge'' (Section \ref{subsec:time-constrained_matchable_edge}). Our filtering process considers each ordered pair of edges which are in the temporal ancestor-descendant relationship in the query DAG. Therefore, we build a query DAG through a greedy algorithm so that many ordered pairs can be considered for filtering. Procedure \texttt{BuildDAG} takes a query graph $q$ and a vertex $r\in V(q)$ as inputs, and outputs a query DAG $\hat{q}_r$ (whose root is $r$, and that is obtained through the greedy algorithm) and its score $S_r$ (Section \ref{subsec:building_query_DAG}). Here, the score $S_r$ of DAG $\hat{q}_r$ is the number of the ordered pairs of edges in the temporal ancestor-descendant relationship in $\hat{q}_r$. We try every vertex once as the root and use the query DAG with the highest score (Lines 1--6). 
    
    \setlinepenalty{10}
    
    $L$ is a set that contains events of the arrival and expiration of edges in $G$. Events for edge arrival are denoted by $+$ and events for edge expiration are denoted by $-$. After building the query DAG, our algorithm performs steps 2 and 3 for each event in $L$ in chronological order.
    
    \setlength{\textfloatsep}{0pt}
\begin{algorithm}[t]
\small{
\KwIn{a temporal data graph $G$, a temporal query graph $q$, and a time window $\delta$}
\KwOut{all time-constrained embeddings}
$S_{max}\gets -\infty$\;
\ForEach{$r\in V(q)$}{
    $(\hat{q}_r, S_r)\gets$ \texttt{BuildDAG}$(q,r)$\;
    \If{$S_{max}<S_r$}{
        $S_{max}\gets S_r$\;
        $\hat{q}\gets \hat{q}_r$\;
    }
}
$g\gets$ an empty temporal graph\;
$L\gets \{(\overline{e},t,+),(\overline{e},t+\delta,-)\mid \overline{e}\in E(G), t=T_G(\overline{e})\}$\;
\While{$L\neq \emptyset$}{
    $(\overline{e},t,\textit{op})\gets$ pop from $L$ where $t$ is minimum\;
    \If{$\textit{op}=+$}{
        \texttt{InsertEdge}$(g, \overline{e})$\;
        $E_{DCS}^{+} \gets$\texttt{TCMInsertion}$(g, \hat{q}, \overline{e})$\;
        \texttt{DCSInsertion}$(\texttt{DCS},E_{DCS}^{+})$\;
        \texttt{FindMatches}$(\texttt{DCS},\overline{e},\emptyset )$\;
    }
    \If{$\textit{op}=-$}{
        \texttt{DeleteEdge}$(g, \overline{e})$\;
        $E_{DCS}^{-} \gets$\texttt{TCMDeletion}$(g, \hat{q}, \overline{e})$\;
        \texttt{DCSDeletion}$(\texttt{DCS},E_{DCS}^{-})$\;
        \texttt{FindMatches}$(\texttt{DCS},\overline{e},\emptyset )$\;    
    }
}

\caption{Time-Constrained Continuous Subgraph Matching}
\label{alg:overview}
\vspace*{-1mm}
}
\end{algorithm}
    
    \noindent\textbf{Updating the data structures.} 
    For each arrived edge $\overline{e}$, we update the data structures \textit{\texttt{DCS}} and \textit{max-min timestamp}. First, we update the data graph $g$ that represents the current state of $G$ by inserting the edge into $g$ (Line 12). We store the edges in an adjacency list in the chronologically sorted order. When inserting an edge, therefore, simply adding it to the end of the adjacency list is sufficient. Next, we update the max-min timestamp by invoking procedure \texttt{TCMInsertion} (Line 13). This procedure returns a set $E^+_{DCS}$ consisting of pairs of edges $e'\in E(q)$ and $\overline{e'}\in E(g)$ such that $e'$ newly becomes a time-constrained matchable edge of $\overline{e'}$ due to the arrived edge. Finally, we update the auxiliary data structure \texttt{DCS} with $E^+_{DCS}$ (Line 14). \texttt{DCS} is the data structure used in \cite{min2021symmetric} to solve the continuous subgraph matching problem. It stores an intermediate result of filtering vertices using the weak embedding of a DAG, which is a necessary condition for embedding. When filtering vertices, \texttt{DCS} in \cite{min2021symmetric} considers all pairs of edges in $g$ and $q$, but our \texttt{DCS} considers only the edge pairs remaining after the filtering by time-constrained matchable edges. 
    Because fewer edge pairs are used for our \texttt{DCS}, the overall running time to update \texttt{DCS} and the number of remaining candidate vertices after filtering are reduced, compared to \texttt{DCS} in \cite{min2021symmetric}.
    In a similar process, the data graph $g$ and the data structures can be updated for expired edges (Lines 17--19). \texttt{DeleteEdge} can be done by removing the edge from the front of the adjacency list.
    The pseudocodes of \texttt{DCSInsertion} and \texttt{DCSDeletion} are available in [23], as these operations are related to the update of the data structure \texttt{DCS} proposed in [23].

    \noindent\textbf{Matching.} After updating the data structures, we find new or expired time-constrained embeddings from the updated data structures by calling the backtracking procedure \icdepaper{\texttt{FindMatches} (Section \ref{sec:backtracking}).}\fullpaper{\texttt{FindMatches.}}
    \fullpaper{\texttt{FindMatches} is based on the backtracking algorithm that solve the continuous subgraph matching problem. Since the continuous subgraph matching problem deals with non-temporal graphs, it doesn't matter which one among parallel edges a query edge matches. So the continuous subgraph matching algorithms generally assume simple graphs and focus only on mapping of vertices. In contrast, in our problem, the search tree varies depending on which one of the parallel edges matches a query edge due to the temporal order. Therefore, we consider parallel edges and the temporal order during backtracking (Section \ref{sec:backtracking}).}

\section{Filtering by Temporal Order}\label{sec:filtering_by_temporal_order_constraint}

In this section, we introduce our edge filtering technique using the temporal order called time-constrained matchable edge and describe the greedy algorithm that generates a query DAG for effective filtering. Afterwards, we propose a data structure called max-min timestamp with polynomial space and an efficient way to update the data structure. This data structure is used to determine if an edge is filtered by the time-constrained matchable edge.

\subsection{Time-Constrained Matchable Edge}\label{subsec:time-constrained_matchable_edge}

There are various ways to filter edges considering the temporal order.
\texttt{Timing}~\cite{li2019time} and \texttt{Hasse}~\cite{sun2017hasse} store all time-constrained partial embeddings of the temporal query graph to prune some arrived edges, but it requires an exponential storage space and time to maintain and update partial embeddings. To avoid this, the TO-tree in \texttt{TOM}~\cite{li2021subgraph} only stores matching candidates of each query edge and filters candidates by considering temporal order between incident query edges. However, temporal relationships between non-incident edges are not utilized for filtering.
Our proposed method aims to occupy less storage space by not storing time-constrained partial embeddings and consider as many ordered pairs in the temporal order as possible for filtering.

\begin{MyDefinition}\label{def:time-constrained_weak_embedding}
Given a temporal data graph $G$, a temporal query graph $q$, and a query DAG $\hat{q}$, a weak embedding $M'$ of $\hat{q}_e$ at $\overline{e}\in E(G)$ is \textit{a time-constrained weak embedding} (\textit{TC-weak embedding} for short) if $T_G(M'(e))=T_G(\overline{e})<T_G(M'(e'))$ for every $e'$ where $e\leadsto_{\hat{q}}e'$. We say that $e$ is a \textit{time-constrained matchable edge} (\textit{TC-matchable edge} for short) of $\overline{e}$ regarding $\hat{q}$ if there exists a TC-weak embedding of $\hat{q}_{e}$ at $\overline{e}$.
\end{MyDefinition}

\begin{MyExample}\label{ex:time-constrained_matchable_edge}
Let us consider $\varepsilon_2$ in $q$ (Figure \ref{fig:temporal_query_graph_q}) and $\sigma_8$ in $G$ (Figure \ref{fig:temporal_data_graph_g}). There exists a weak embedding $M'_1=\{ (\varepsilon_2,\sigma_8),(\varepsilon_4,\sigma_{13}),(\varepsilon_5,\allowbreak\sigma_{10}),(\varepsilon_6,\sigma_{14})\}$ of $\hat{q}_{\varepsilon_2}$ at $\sigma_8$. This weak embedding is a TC-weak embedding of $\hat{q}_{\varepsilon_2}$ at $\sigma_8$ because $T_G(\sigma_8)<T_G(M'_1(\varepsilon_4))=T_G(\sigma_{13})$, $T_G(\sigma_8)<T_G(M'_1(\varepsilon_5))=T_G(\sigma_{10})$, and $T_G(\sigma_8)<T_G(\allowbreak M'_1(\varepsilon_6))=T_G(\sigma_{14})$ hold. Therefore, $\varepsilon_2$ is a TC-matchable edge of $\sigma_8$ regarding $\hat{q}$. 
On the other hand, there is no TC-weak embedding of $\hat{q}_{\varepsilon_2}$ at $\sigma_{12}$ since there is a constraint $\varepsilon_2 \prec \varepsilon_5$, but $\sigma_9$ and $\sigma_{10}$, which can be matched to $\varepsilon_5$, have timestamps not greater than $T_G(\sigma_{12})=12$. Hence, $\varepsilon_2$ is not a TC-matchable edge of $\sigma_{12}$ regarding $\hat{q}$.

\end{MyExample}

\begin{MyLemma}\label{lemma:time_constrained_filtering}
If $e\in E(q)$ is not a TC-matchable edge of $\overline{e}\in E(G)$ regarding some DAG of $q$, then there is no time-constrained embedding $M$ such that $M(e)=\overline{e}$.
\end{MyLemma}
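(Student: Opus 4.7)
The plan is to prove the contrapositive: assuming a time-constrained embedding $M$ of $q$ in $G$ with $M(e)=\overline{e}$ exists, I would construct a TC-weak embedding $M'$ of $\hat{q}_e$ at $\overline{e}$, thereby showing $e$ is a TC-matchable edge of $\overline{e}$ regarding any DAG $\hat{q}$ of $q$ (so in particular the one assumed in the hypothesis).

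First I would set up the bookkeeping between the path tree $(\hat{q}_e)^T$ and the DAG $\hat{q}_e$. By Definition \ref{def:path_tree}, every node $x$ of $(\hat{q}_e)^T$ corresponds, via the unique root-to-$x$ path, to a unique vertex $\pi(x)\in V(\hat{q}_e)$, and every tree edge corresponds to a unique DAG edge in $E(\hat{q}_e)\subseteq E(q)$. I would then define $M'(x)=M(\pi(x))$ for each node and $M'(\eta)=M(\pi(\eta))$ for each edge of the path tree, and observe that $M'(e)=M(e)=\overline{e}$, so the root condition of Definition \ref{def:weak_embedding} is met.

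Next I would verify the homomorphism conditions. Since $M$ is an embedding of $q$ in $G$, it satisfies label preservation and maps each query edge to a genuine $G$-edge between the mapped endpoints (conditions (2) and (3) of the time-constrained embedding). Because each edge of $(\hat{q}_e)^T$ corresponds under $\pi$ to an edge of $q$, composing $\pi$ with $M$ yields a homomorphism of $(\hat{q}_e)^T$ into $G$; injectivity is not required. Thus $M'$ is a weak embedding of $\hat{q}_e$ at $\overline{e}$.

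Finally, to upgrade $M'$ to a TC-weak embedding, I would invoke the temporal-order clause. Take any $e'\in E(\hat{q}_e)$ with $e\leadsto_{\hat{q}}e'$. By Definition \ref{def:timing_ancestor} (and our simplifying convention that $e\prec e'$), the temporal-order axiom (4) of the time-constrained embedding gives $T_G(M(e))<T_G(M(e'))$, i.e., $T_G(\overline{e})<T_G(M'(e'))$, which is exactly the TC-weak embedding condition of Definition \ref{def:time-constrained_weak_embedding}. The only delicate point is the path-tree/DAG correspondence in the first step; once $\pi$ is defined cleanly, every remaining verification is an immediate reading of the definitions, so I do not expect a serious obstacle.
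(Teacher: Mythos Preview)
Your proposal is correct and follows essentially the same contrapositive argument as the paper: both restrict the time-constrained embedding $M$ to $\hat{q}_e$ to obtain a weak embedding $M'$ and then invoke condition (4) to verify the TC-weak embedding inequality. Your version is in fact more careful than the paper's, since you explicitly handle the path-tree projection $\pi$ that Definition~\ref{def:weak_embedding} requires, whereas the paper simply asserts that taking the same images as $M$ yields a weak embedding.
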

\fullpaper{
\begin{proof}
Suppose that there exists a time-constrained embedding $M$ such that $M(e)=\overline{e}$. Then, we need to show that $e$ is a TC-matchable edge of $\overline{e}$ for any DAG $\hat{q}$ of $q$. For any query DAG $\hat{q}$, consider the weak embedding $M'$ of $\hat{q}_{e}$ at $\overline{e}$ such that the image of each element under $M'$ is the same as the image under $M$. For a temporal descendant $e'$ of $e$ in $\hat{q}$, $T_G(M'(e))=T_G(\overline{e})<T_G(M'(e'))$ holds by the definition of time-constrained embedding. Thus, $M'$ is a TC-weak embedding of $\hat{q}_e$ at $\overline{e}$ and so $e$ is a TC-matchable edge of $\overline{e}$ regarding $\hat{q}$.
\end{proof}
}

According to Lemma \ref{lemma:time_constrained_filtering}, if $e\in E(q)$ is not a TC-matchable edge of $\overline{e}\in E(G)$ regarding some DAG $\hat{q}$ of $q$, we can filter $\overline{e}$ 
from the set of data edges that are possible to match $e$.
For example, since $\varepsilon_2$ is not a TC-matchable edge of $\sigma_{12}$, we do not need to map $\varepsilon_2$ to $\sigma_{12}$ while backtracking, so the search space is reduced. In comparison to \texttt{TOM}~\cite{li2021subgraph}, which only checks the temporal order constraints between incident edges, \texttt{TOM} does not filter $\sigma_{12}$ because $\varepsilon_2$ and $\varepsilon_5$ are not incident. To enhance the filtering, we select the query DAG $\hat{q}$ containing as many temporal ancestor-descendant relationships as possible. We will describe in Section \ref{subsec:building_query_DAG} how to build the query DAG $\hat{q}$ to achieve this goal. Additionally, to prune more data edges which cannot be matched to $e$, we use both $\hat{q}$ and $\hat{q}^{-1}$ when filtering the candidates by Lemma~\ref{lemma:time_constrained_filtering}.

\subsection{Building Query DAG}\label{subsec:building_query_DAG}
Because the temporal ancestor-descendant relationship is affected by the shape of the query DAG, the results of filtering through TC-matchable edges are also affected by the shape of the query DAG. In particular, the ordered pair consisting of two edges with no path between them in the query DAG is not considered for filtering. Therefore, we propose a greedy algorithm (Algorithm \ref{alg:buildDAG}) that builds a query DAG so that there are many ordered pairs in the temporal ancestor-descendant relationships.

Algorithm \ref{alg:buildDAG} that makes a DAG with $r\in V(q)$ as its root proceeds as follows. 
\deleted{Initially, we set the query DAG $\hat{q}_r$ to an empty graph and the score of the DAG $S_r$ to 0. In addition, we initialize the set $cand$ of query vertices to $\{r\}$ and the score of each vertex $Score[u]$ to 0.}
The vertices in $cand$ are the candidates to come next in the topological order of the current DAG. $Score[u]$ stores the number of ordered pairs that will have temporal ancestor-descendant relationships when the DAG is constructed by selecting $u$ as the next vertex from $cand$. Now, we repeat the process of selecting a query vertex $u$ from $cand$, adding $u$ with directed edges into the current DAG, and updating $cand$ and $Score$ until there are no more vertices in $cand$. Specifically, we greedily select $u$ with the highest $Score[u]$ among the candidates in $cand$ so that the score of the query DAG can be high. If there is more than one vertex with the highest score, we select the first vertex inserted into $cand$ among them. 
\deleted{Then, we add the selected vertex $u$ into $\hat{q}_r$. Next, for each edge $(u,u')\in E(q)$ incident on $u$, if $u'$ already exists in $\hat{q}_{r}$, we add a directed edge from $u'$ to $u$ because $u'$ comes before $u$ in the topological order of $\hat{q}_{r}$. If $u'$ doesn't exist in $\hat{q}_{r}$, we insert $u'$ into $cand$ if $u'$ is not in $cand$ and compute $Score[u']$.}
Finally, the score of the DAG $S_r$ is the sum of $Score[u]$. We invoke Algorithm \ref{alg:buildDAG} with every vertex of the temporal query graph as the root once and then select the DAG with the highest score. 
\deleted{The query DAG is also used in \texttt{DCS}, so if there are multiple query DAGs with the highest score, the one with the highest height is chosen, as in \cite{min2021symmetric}. We use the query DAG built through the above process and its reverse for filtering by the TC-matchable edge and \texttt{DCS}.}

\setlength{\textfloatsep}{0pt}
\begin{algorithm}[t]
\small{
\KwIn{a temporal query graph $q$ and a root vertex $r$}
\KwOut{a query DAG $\hat{q}_r$ and the score $S_r$ of $\hat{q}_r$}

$\hat{q}_r\gets$an empty graph, $S_r\gets 0$, $cand\gets \{r\}$\;
$Score[u]\gets 0$ for $u\in V(q)$\;
\While{$cand\neq \emptyset$}{
     $u\gets$ pop from $cand$ whose $Score[u]$ is maximum\;
     add $u$ into $V(\hat{q}_r)$\;
     \ForEach{$(u,u')\in E(q)$}{
        \If{$u'\in V(\hat{q}_r)$}{
            add an edge $\overrightarrow{(u',u)}$ into $E(\hat{q}_r)$\;
        }
        \Else{
            \If{$u'\not\in cand$}{
                $cand\gets cand\cup \{u'\}$
            }
            compute $Score[u']$\;
        }
     }
     $S_r\gets S_r+Score[u]$\;
}
\KwRet{$(\hat{q}_r,S_r)$}\;

\caption{BuildDAG}
\label{alg:buildDAG}
\vspace*{-1mm}
}
\end{algorithm}

\fullpaper{
\begin{MyExample}
Figure \ref{fig:buildDAG} shows some processes of building $\hat{q}$ in Figure \ref{fig:DAG_q} from $q$ in Figure \ref{fig:temporal_query_graph_q} through Algorithm \ref{alg:buildDAG} with $u_1$ as the root. After $u_1$ is selected from $cand$ and processed, there are two candidates $u_2$ and $u_3$ in $cand$. $Score[u_2]$ is 1 because $\varepsilon_1$ becomes a temporal ancestor of $\varepsilon_3$ if $u_2$ is selected as the next vertex. Similarly, $Score[u_3]$ is 2 because of two ordered pairs $\varepsilon_2 \prec \varepsilon_4$ and $\varepsilon_2 \prec \varepsilon_6$. Therefore, $u_3$ with a higher score is selected among $u_2$ and $u_3$, and the DAG is changed as shown in Figure \ref{fig:buildDAG1}. Then, given $Score[u_2]=Score[u_4]=1$ and $Score[u_5]=0$, we choose $u_2$ because $u_2$ was inserted into $cand$ before $u_4$, so the DAG in Figure \ref{fig:buildDAG2} is created. Next, we choose $u_4$ whose score is 2 and then choose $u_5$ whose score is 0. Finally, we return the DAG in Figure \ref{fig:DAG_q} with a score of 5 (=2+1+2).
\end{MyExample}

\begin{figure}
\centering
    \subcaptionbox{After selecting $u_3$\label{fig:buildDAG1}}{
        \includegraphics[width=0.29\linewidth]{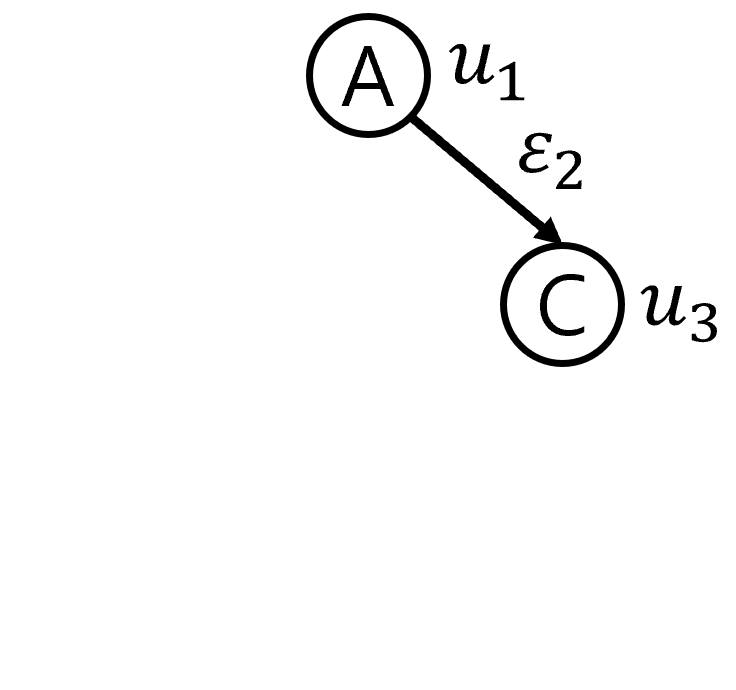}
    }
    \subcaptionbox{After selecting $u_2$\label{fig:buildDAG2}}{
        \includegraphics[width=0.29\linewidth]{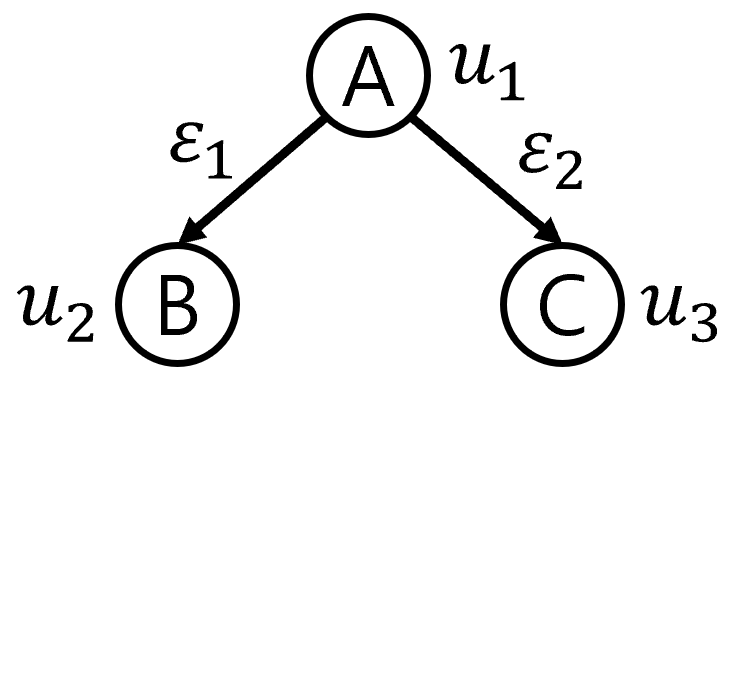}
    }
    \subcaptionbox{After selecting $u_4$\label{fig:buildDAG3}}{
        \includegraphics[width=0.29\linewidth]{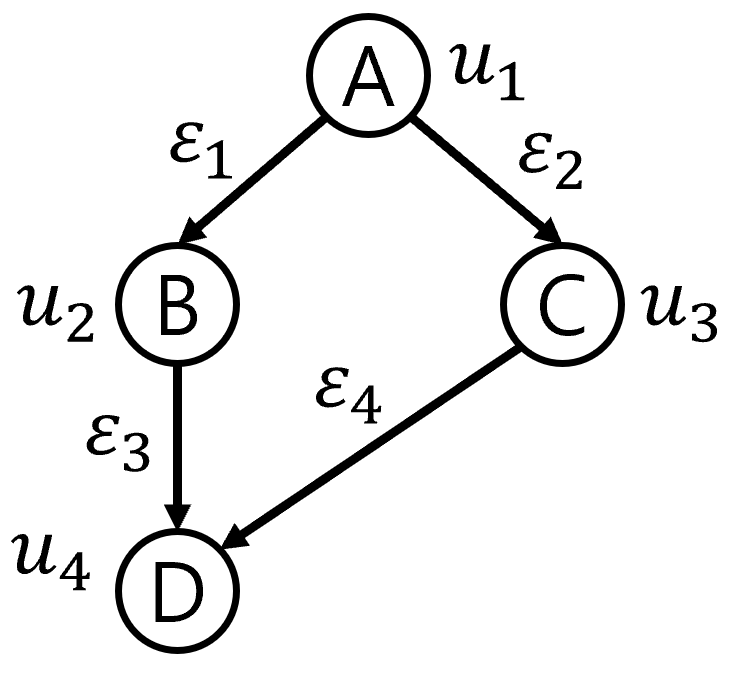}
    }
\caption{{Process of building DAG from $q$ in Figure \ref{fig:temporal_query_graph_q} with $u_1$ as the root}}
\label{fig:buildDAG}
\end{figure}
}

\begin{MyLemma}
Given a query graph $q$, the time complexity of Algorithm \ref{alg:buildDAG} is $O(\sum_{u \in V(q)} \deg(u)^2\times |E(q)|)$ where $\deg(u)$ represents the degree of vertex $u$. 
\end{MyLemma}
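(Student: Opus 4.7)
The plan is to bound the total work of Algorithm \ref{alg:buildDAG} by a charging argument that assigns costs to each vertex $u' \in V(q)$ based on how many times $Score[u']$ is recomputed and the cost of each such recomputation. First I would observe that the outer \texttt{while} loop runs exactly $|V(q)|$ times, since each iteration moves one vertex from $cand$ into $V(\hat{q}_r)$ and every vertex of $q$ is eventually processed. The work in each iteration, apart from score computations, consists of (i) a max selection from $cand$, (ii) the scan of the $\deg(u)$ edges incident on the selected $u$, and (iii) the constant-time bookkeeping (edge orientation, addition to $cand$, update of $S_r$). Summed over all iterations, (ii) contributes $\sum_u \deg(u) = O(|E(q)|)$, which will turn out to be dominated by the score-computation term.

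Next I would focus on the score computations, which dominate the running time. For a fixed vertex $u' \in V(q)$, $Score[u']$ is recomputed at Line 12 every time a neighbor $u$ of $u'$ in $q$ is selected and processed, because the \texttt{for}-loop iterates over all edges $(u,u') \in E(q)$ incident on $u$. Since $u'$ has $\deg(u')$ neighbors in $q$ and each of them can cause at most one recomputation of $Score[u']$ before $u'$ itself is removed from $cand$, the number of recomputations of $Score[u']$ is at most $\deg(u')$. The cost of a single recomputation can be bounded as follows: $Score[u']$ counts the number of pairs $(e_{\textrm{old}}, e_{\textrm{new}})$ that would be added to the temporal ancestor--descendant relation if $u'$ were inserted next, where each $e_{\textrm{new}} = (u'',u')$ for some already-placed neighbor $u'' \in V(\hat{q}_r) \cap N_q(u')$ and each $e_{\textrm{old}}$ is an ancestor of $e_{\textrm{new}}$ in $\hat{q}_r$ that is temporally related to $e_{\textrm{new}}$. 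The number of candidate new edges $e_{\textrm{new}}$ is at most $\deg(u')$, and scanning the ancestors of each such edge in $\hat{q}_r$ together with a constant-time temporal check costs $O(|E(q)|)$, giving $O(\deg(u') \cdot |E(q)|)$ per recomputation.

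Combining these two bounds, the total work charged to $u'$ over the whole algorithm is $O(\deg(u') \cdot \deg(u') \cdot |E(q)|) = O(\deg(u')^2 \cdot |E(q)|)$, and summing over all $u' \in V(q)$ yields the claimed $O\bigl(\sum_{u \in V(q)} \deg(u)^2 \cdot |E(q)|\bigr)$ bound. The remaining $O(|V(q)|^2 + |E(q)|)$ overhead for max selections and edge scans is absorbed into this term since $|V(q)|^2 \le |V(q)| \cdot |E(q)|$ for connected query graphs and $|E(q)| \le \sum_u \deg(u)^2 \cdot |E(q)|$ trivially.

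The main obstacle is a clean accounting of the cost of a single $Score[u']$ computation, since the algorithm statement describes this step only at a high level. The argument hinges on the observation that one does not need to recompute from scratch an exhaustive ancestor-descendant listing of the whole DAG, but merely enumerate ancestors of the prospective new edges incident on $u'$ and check temporal relations; a looser implementation could inflate the bound. The charging of at most $\deg(u')$ recomputations per vertex is the other delicate point, and relies on the fact that Line 12 is executed only from the \texttt{for}-loop over edges incident on the currently selected vertex, so each neighbor of $u'$ triggers at most one recomputation of $Score[u']$ before $u'$ is itself selected.
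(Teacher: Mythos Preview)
Your proof is correct and follows essentially the same approach as the paper: both identify the score recomputation at Line~12 as the dominant cost, bound the number of recomputations of $Score[u']$ by $\deg(u')$ via the charging to incident edges, bound each recomputation by $O(\deg(u')\cdot|E(q)|)$ via an ancestor scan, and observe that the remaining loop overhead is absorbed. The only cosmetic differences are that the paper uses a priority queue to get $O(|V(q)|\log|V(q)|)$ rather than your $O(|V(q)|^2)$ for the selection step (irrelevant to the final bound), and your description of which edge pairs $Score[u']$ counts differs slightly from the paper's (it iterates over not-yet-placed neighbors $u'_n$ and counts temporal ancestors of $(u',u'_n)$), but this does not affect the per-recomputation cost analysis.
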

\fullpaper{
\begin{proof}
Lines 1 and 2 can be done in $O(1)$ time and $O(|V(q)|)$ time, respectively. The while loop is iterated $|V(q)|$ times, as each vertex in $V(q)$ can only be added to $cand$ once. Lines 5 and 13 can be executed in constant time. The process of pushing and popping vertex in $cand$ takes $O(\log{|V(q)|})$ time for each operation (Lines 4 and 11). Lines 7-8, which add an edge to the DAG $\hat{q}_r$, take a total of $O(|E(q)|)$ time as it is performed for each edge. Therefore, the time complexity of the algorithm up to this point is $O(|V(q)|\log{|V(q)|}+|E(q)|)$.

The remaining part involves the computation of $Score[u']$ (Line 12). To compute $Score[u']$, we iterate over the neighbors $u'_n$ of $u'$ and count the number of temporal ancestors for $(u',u'_n)$ in the DAG $\hat{q}_r$ if $u'_n$ does not belong to $V(\hat{q}_r)$. Since counting the number of temporal ancestors for each edge requires $O(|E(q)|)$ time and there are at most $\deg(u')$ neighbors, Line 12 takes $O(\deg(u')\times |E(q)|)$ to execute once. Line 12 is executed when visiting an edge $(u,u')$, and since each edge is visited at most once, Line 12 is performed up to $\deg(u')$ times for the vertex $u'$. Thus, for vertex $u'$, the total run time of Line 12 becomes $O(\deg(u')^2 \times |E(q)|)$. Since vertices are visited only once, Line 12 requires $O(\sum_{u \in V(q)} \deg(u)^2 \times |E(q)|)$ time in total. Consequently, the time complexity of Algorithm~\ref{alg:buildDAG} is $O(|V(q)|\log{|V(q)|}+|E(q)|+\sum_{u \in V(q)} \deg(u)^2 \times |E(q)|$), which simplifies to $O(\sum_{u \in V(q)} \deg(u)^2 \times |E(q)|)$. 
\end{proof}
}

Algorithm \ref{alg:buildDAG} is called $|V(q)|$ times in Algorithm \ref{alg:overview}, so the total time complexity of building DAG is $O(\sum_{u \in V(q)} \deg(u)^2\times |V(q)| \times |E(q)|)$.

\subsection{Computing TC-Matchable Edge}\label{subsec:computing_time-constrained_matchable_edge}

In this subsection, we first propose a data structure \textit{max-min time\-stamp}, which helps determine whether TC-weak embeddings exist or not and requires polynomial space. Also, we describe how to efficiently update the data structure when a new edge arrives or an existing edge expires in the data graph. There are two main points we considered when designing the data structure.

One is that when the data graph changes due to a newly arrived or expired edge $\overline{e}$, it does not only affect the TC-matchable edges for $\overline{e}$, but also affects other TC-matchable edges. For example, $\varepsilon_2$ of $q$ in Figure \ref{fig:temporal_query_graph_q} becomes a TC-matchable edge of $\sigma_8$ of $G$ in Figure \ref{fig:temporal_data_graph_g} regarding $\hat{q}$ after $\sigma_{14}$ arrives. However, since it is expensive to recompute the filtering results for all edges, it is necessary to focus on the part affected by the changed edge $\overline{e}$. 

The other point is that it is inefficient to search for TC-weak embeddings from scratch to determine whether a query edge is a TC-matchable edge of a data edge for each update. If we store information about partial TC-weak embeddings we found before, we can utilize it to determine if there is a TC-weak embedding. For instance, when determining whether $\varepsilon_2$ is a TC-matchable edge of $\sigma_8$ after the arrival of $\sigma_{14}$, we need to find the TC-weak embedding $M'=\{(\varepsilon_2,\sigma_8),(\varepsilon_4,\sigma_{13}),(\varepsilon_5,\allowbreak\sigma_{10}), (\varepsilon_6,\sigma_{14})\}$. A part of $M'$, $\{(\varepsilon_4,\sigma_{13}),(\varepsilon_5,\allowbreak\sigma_{10})\}$, is created when $\sigma_{13}$ arrives, so if we store the information about that part, we can use it to find $M'$ when $\sigma_{14}$ arrives.

\begin{MyDefinition}\label{def:representative_timestamp}
Given a weak embedding $M'$ of $\hat{q}_u$, the \textit{min time\-stamp for $e\in E(q)$} of $M'$  is defined as the minimum timestamp among $M'(e')$ where $e'$ is a temporal descendant of $e$ in $\hat{q}$. If there is no such $e'$, it is defined as $\infty$. 
\end{MyDefinition}

\begin{MyDefinition}\label{def:array_E}
For each $u\in V(q)$, $v\in V(G)$, and $e\in E(q)$, the \textit{max-min timestamp for $e$ of $\hat{q}_u$ at $v$}, denoted by $\mathcal{T}(\hat{q})[u,v,e]$, is the largest min timestamp for $e$ among weak embeddings of $\hat{q}_u$ at $v$. If there is no weak embedding of $\hat{q}_u$ at $v$, it is defined as $-\infty$. For each $e'\in E(q), \overline{e'}\in E(G)$, and $e\in E(q)$, similarly, $\mathcal{T}(\hat{q})[e',\overline{e'},e]$ denotes the \textit{max-min timestamp for $e$ of $\hat{q}_{e'}$ at $\overline{e'}$}, which is the largest min timestamp for $e$ among weak embeddings of $\hat{q}_{e'}$ at $\overline{e'}$. For brevity, we will use the simplified notation $\mathcal{T}$ instead of $\mathcal{T}(\hat{q})$ when the context is unambiguous.
\end{MyDefinition}

\begin{MyExample}\label{ex:array_E}
In a temporal data graph $G$ in Figure \ref{fig:temporal_data_graph_g}, a temporal query graph $q$ in Figure \ref{fig:temporal_query_graph_q}, and a query DAG $\hat{q}$ in Figure \ref{fig:DAG_q}, there exists four weak embeddings $\{(\varepsilon_4,\sigma_{13}),\allowbreak(\varepsilon_5,\sigma_9), (\varepsilon_6,\sigma_{7})\}$, $\{(\varepsilon_4,\sigma_{13}),\allowbreak(\varepsilon_5,\sigma_9), (\varepsilon_6,\sigma_{14})\}$, $\{(\varepsilon_4,\sigma_{13}),\allowbreak(\varepsilon_5,\sigma_{10}), (\varepsilon_6,\sigma_{7})\}$, and $\{(\varepsilon_4,\sigma_{13}),(\varepsilon_5,\allowbreak\sigma_{10}), (\varepsilon_6,\sigma_{14})\}$ of $\hat{q}_{u_3}$ at $v_4$. Since $\varepsilon_4$, $\varepsilon_5$, and $\varepsilon_6$ are temporal descendants of $\varepsilon_2$, the min timestamps for $\varepsilon_2$ of each weak embedding are 7, 9, 7, and 10, respectively. Thus, $\mathcal{T}[u_3,\allowbreak v_4,\varepsilon_2]$ stores the maximum value of 10 among them. 
\end{MyExample}

\setlinepenalty{80}

For a TC-weak embedding $M'$ of $\hat{q}_{e}$ at $\overline{e}$, the min timestamp for $e$ of any partial weak embedding of $M'$ is greater than the timestamp of $\overline{e}$ by definition of TC-weak embedding. That is, if there is no partial weak embedding with a min timestamp for $e$ greater than the timestamp of $\overline{e}$, there is no TC-weak embedding of $\hat{q}_{e}$ at $\overline{e}$. 
Therefore, we can use the max-min timestamps $\mathcal{T}$ to determine if $e\in E(q)$ is a TC-matchable edge of $\overline{e}\in E(G)$ regarding $\hat{q}$. 
In Example \ref{ex:array_E}, since $\mathcal{T}[u_3,v_4,\varepsilon_2]=10$, we can see that there is a weak embedding $M'_1$ of $\hat{q}_{u_3}$ at $v_4$ whose min timestamp for $\varepsilon_2$ is 10. To determine whether $\varepsilon_2=(u_1,u_3)$ is a TC-matchable edge of $\sigma_8=(v_1,v_4,8)$ regarding $\hat{q}$, consider a weak embedding $M'_2$ of $\hat{q}_{\varepsilon_2}$ at $\sigma_8$ by adding $\{(\varepsilon_2,\sigma_8)\}$ to the previous weak embedding $M'_1$. Then, $T_G(M'_2(\varepsilon_2))=T_G(\sigma_8)=8<10\leq T_G(M'_2(\varepsilon))$ holds for every $\varepsilon$ where $\varepsilon_2\leadsto_{\hat{q}} \varepsilon$ because $T_G(M'_2(\varepsilon))$ is greater than or equal to 10 which is the min timestamp for $\varepsilon_2$ of $M'_1$. Thus, $M'_2$ is a TC-weak embedding $\hat{q}_{\varepsilon_2}$ at $\sigma_8$ that we want to find and $\varepsilon_2$ is a TC-matchable edge of $\sigma_8$ regarding $\hat{q}$.
On the other hand, $\varepsilon_2=(u_1,u_3)$ is not a TC-matchable edge of $\sigma_{12}=(v_1,v_4,12)$ regarding $\hat{q}$. This is because we can confirm that the min timestamp for $\varepsilon_2$ in all weak embeddings of $\hat{q}_{u_3}$ at $v_4$ is not greater than $T_G(\sigma_{12})=12$, which is derived from $\mathcal{T}[u_3,v_4,\varepsilon_2]=10$.
We can generalize this as follows.

\setlinepenalty{10}

\begin{MyLemma}\label{lemma:array_E_time-constrained_matchable_edge}
Given a temporal data graph $G$, a temporal query graph $q$, and a query DAG $\hat{q}$, $e=(u_1,u_2)\in E(q)$ is a TC-matchable edge of $\overline{e}=(v_1,v_2,t)\in E(G)$ regarding $\hat{q}$ if and only if the time\-stamp of $\overline{e}$ ($=t$) is less than $\mathcal{T}[u_2,v_2,e]$.
\end{MyLemma}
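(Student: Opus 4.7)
The statement is an ``iff'', so I would prove each direction by moving between a weak embedding of $\hat{q}_{u_2}$ at $v_2$ (which is what $\mathcal{T}[u_2,v_2,e]$ quantifies over) and a TC-weak embedding of $\hat{q}_{e}$ at $\overline{e}$ (which is what ``TC-matchable'' demands). The key structural observation I would use is that, since $e=(u_1,u_2)$ with $u_1$ the parent and $u_2$ the child, the sub-DAG $\hat{q}_e$ is exactly $\hat{q}_{u_2}$ with the edge $e$ (and vertex $u_1$) prepended; correspondingly, the path tree of $\hat{q}_e$ is the path tree of $\hat{q}_{u_2}$ with a new root $u_1$ whose unique child is $u_2$. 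Moreover, by the definition of edge-ancestor given in the preliminaries, every temporal descendant $e'$ of $e$ starts at $u_2$ or a descendant of $u_2$, so $e'$ lies in $\hat{q}_{u_2}$ and is therefore covered by any weak embedding of $\hat{q}_{u_2}$ at $v_2$.

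For the forward direction ($\Rightarrow$), I would start from a TC-weak embedding $M'$ of $\hat{q}_e$ at $\overline{e}$ and restrict it to the sub-path-tree rooted at $u_2$, obtaining a weak embedding $M'_{u_2}$ of $\hat{q}_{u_2}$ at $v_2$ (the restriction is a valid homomorphism because $M'$ already sends $u_2$ to $v_2$, by the edge constraint $M'(e)=\overline{e}=(v_1,v_2,t)$). By the TC condition, $T_G(M'(e'))>t$ for every $e'$ with $e\leadsto_{\hat{q}}e'$, so the min timestamp for $e$ of $M'_{u_2}$ is strictly greater than $t$ (and equals $\infty$ if $e$ has no temporal descendants, in which case the conclusion is immediate). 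Taking the max over all weak embeddings yields $\mathcal{T}[u_2,v_2,e]>t$.

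For the backward direction ($\Leftarrow$), I would pick a weak embedding $M'_1$ of $\hat{q}_{u_2}$ at $v_2$ that witnesses the value $\mathcal{T}[u_2,v_2,e]$, so its min timestamp for $e$ is at least $\mathcal{T}[u_2,v_2,e]>t$. Because the path tree of $\hat{q}_e$ is obtained from the path tree of $\hat{q}_{u_2}$ by prepending the single edge $(u_1,u_2)$, the extension $M'_2 := M'_1 \cup \{(u_1,v_1),(e,\overline{e})\}$ is a homomorphism of the path tree of $\hat{q}_e$ into $G$ with $M'_2(e)=\overline{e}$, i.e., a weak embedding of $\hat{q}_e$ at $\overline{e}$ (the label and edge conditions hold because $\overline{e}$ is an edge between $v_1$ and $v_2$ with the right vertex labels inherited from $M'_1(u_2)=v_2$ and $V(q)$-labels). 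The TC condition is then immediate: for every temporal descendant $e'$ of $e$, $T_G(M'_2(e'))=T_G(M'_1(e'))$ is at least the min timestamp for $e$ of $M'_1$, which strictly exceeds $t=T_G(\overline{e})=T_G(M'_2(e))$.

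The two main things to be careful about are (i) making sure ``temporal descendants of $e$ in $\hat{q}$'' and ``edges that live in $\hat{q}_{u_2}$'' line up so that restriction and extension both yield syntactically legal weak embeddings, and (ii) handling the degenerate case where $e$ has no temporal descendants, where the min timestamp is $\infty$ by Definition~\ref{def:representative_timestamp} and both directions reduce to the existence of any weak embedding of $\hat{q}_{u_2}$ at $v_2$. Neither point involves heavy computation; the main subtlety is purely bookkeeping about the relationship between $\hat{q}_e$ and $\hat{q}_{u_2}$ and their path trees, which I would dispatch with the structural observation stated above.
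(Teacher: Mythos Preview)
Your proposal is correct and follows essentially the same approach as the paper: restrict a TC-weak embedding of $\hat{q}_e$ at $\overline{e}$ to a weak embedding of $\hat{q}_{u_2}$ at $v_2$ for the forward direction, and extend a witnessing weak embedding of $\hat{q}_{u_2}$ at $v_2$ by the pair $(e,\overline{e})$ for the backward direction. Your additional remarks on the path-tree structure and the degenerate case are more explicit than the paper's write-up but do not change the argument.
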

\fullpaper{
\begin{proof}
First, we show the `only if' part. Let $e$ be a TC-matchable edge of $\overline{e}$, implying the existence of a TC-weak embedding $M$ of $\hat{q}_e$ at $\overline{e}$. According to Definition \ref{def:time-constrained_weak_embedding}, for every $e'$ such that $e\leadsto_{\hat{q}}e'$, it holds that $T_G(M(e'))$ is greater than $T_G(M(e))=T_G(\overline{e})=t$ . That is, for a weak embedding $M'=M-\{(e, \overline{e})\}$ of $\hat{q}_{u_2}$ at $v_2$, the min timestamp for $e$ of $M'$ ($=t'$) is greater than $t$. Since $\mathcal{T}[u_2, v_2, e]$ is the largest min timestamp for $e$ among weak embeddings of $\hat{q}_{u_2}$ at $v_2$, it follows that $\mathcal{T}[u_2, v_2, e] \ge t' > t$.\\
Now, we demonstrate the `if' part. According to Definition \ref{def:array_E}, there exists a weak embedding $M'$ of $\hat{q}_{u_2}$ at $v_2$ with a min timestamp for $e$ equal to $\mathcal{T}[u_2, v_2, e]$. By utilizing Definition \ref{def:representative_timestamp} and the given assumption, $T_G(M'(e')) \ge \mathcal{T}[u_2, v_2, e] > t$ holds for every $e'$ where $e\leadsto_{\hat{q}}e'$. Hence, the weak embedding $M=M' \cup \{(e, \overline{e})\}$ of $\hat{q}_e$ at $\overline{e}$ satisfies the conditions of a TC-weak embedding, as defined in Definition \ref{def:time-constrained_weak_embedding}. Therefore, $e$ is a TC-matchable edge of $\overline{e}$ and we proved the statement.
\end{proof}
}

According to Lemma \ref{lemma:array_E_time-constrained_matchable_edge}, we can determine at constant time whether $e\in E(q)$ is a TC-matchable edge of $\overline{e}\in E(G)$ regarding $\hat{q}$, given an array $\mathcal{T}$. Therefore, when a change occurs in the data graph, it is sufficient to update $\mathcal{T}$ without searching TC-weak embeddings. In order to solve the two issues mentioned at the beginning of the subsection, we should recompute only the portion of $\mathcal{T}$ whose values may change, and also utilize the portion of $\mathcal{T}$ already computed when recomputing $\mathcal{T}[u,v,e]$.

We address the second issue by using the fact that a weak embedding of $\hat{q}_u$ consists of weak embeddings of $\hat{q}_{(u,u_c)}$ where $u_c$ is a children of $u$.
For each $(u,u_c)$, consider a weak embedding with the largest min timestamp for $e$ among weak embeddings of $\hat{q}_{(u,u_c)}$. Then the weak embedding of $\hat{q}_u$ composed of such weak embeddings also has the largest min timestamp for $e$.
Furthermore, a weak embedding of $\hat{q}_{(u,u_c)}$ can be divided into the mapping of $(u,u_c)$ and the weak embedding of $\hat{q}_{u_c}$. Since $\mathcal{T}[u_c,\cdot,e]$ stores the max-min timestamps for $e$ among weak embeddings of $\hat{q}_{u_c}$, it can be utilized to compute $\mathcal{T}[u,v,e]$. Given $\mathcal{T}[u_c,\cdot,e]$ for every $u_c$, we can compute $\mathcal{T}[u,v,e]$ as follows. 

First, assuming that $(u,u_c)$ matches $(v,v_c,t)$, $\mathcal{T}[(u,u_c),(v,v_c,\allowbreak t),e]$, which is the largest min time\-stamp for $e$ among weak embeddings of $\hat{q}_{(u,u_c)}$ at $(v,v_c,t)$, can be obtained from $\mathcal{T}[u_c,v_c,e]$ that we have. If $(u,u_c)$ is not a temporal descendant of $e$, then $\mathcal{T}[(u,u_c),(v,v_c,t),e]$ is not related to $(u,u_c)$. So, $\mathcal{T}[(u,u_c),\allowbreak (v,v_c,t),e]$ is the same as the max-min timestamp for $e$ of $\hat{q}_{u_c}$ at $v_c$ (i.e., $\mathcal{T}[u_c,v_c,e]$). If $(u,u_c)$ is a temporal descendant of $e$, $\mathcal{T}[(u,u_c),(v,v_c,t),\allowbreak e]$ is the smaller of $\mathcal{T}[u_c,v_c,e]$ and $t$. The following equation summarizes this.
\begin{equation}\nonumber
    \mathcal{T}[(u,u_c),(v,v_c,t),e] =
    \begin{cases}
        \min(\mathcal{T}[u_c,v_c,e],t) & \hspace{-2mm}\text{if $e\leadsto_{\hat{q}} (u,u_c)$}\\
        \mathcal{T}[u_c,v_c,e] & \hspace{-2mm}\text{otherwise}
    \end{cases}
\end{equation}
Now, consider a weak embedding of $\hat{q}_{(u,u_c)}$ for $\mathcal{T}[u,v,e]$. Since there may be several edges incident on $v$ such that $(u,u_c)$ can be matched, the largest min timestamp for $e$ among weak embeddings of $\hat{q}_{(u,u_c)}$ is the maximum of the max-min timestamps when $(u,u_c)$ is matched to each edge. Finally, if the largest min timestamp is computed for all children of $u$, $\mathcal{T}[u,v,e]$ becomes the smallest of them. Summarizing this process, the following recurrence can be obtained.

\begin{equation}
    \mathcal{T}[u,v,e]=\min_{u_c\in \textrm{Child}(u)}(\max_{(v,v_c,t)}(\mathcal{T}[(u,u_c),(v,v_c,t),e]))\label{eqn:recurrence_E}
\end{equation}
Based on the above recurrence, we can compute $\mathcal{T}$ by dynamic programming in a bottom-up fashion in DAG $\hat{q}$.

What is now left is to recompute only the portions that may change, not the whole of $\mathcal{T}$, when $\mathcal{T}$ needs to be updated due to the arrival or expiration of an edge. 
We handle this by recomputing $\mathcal{T}[u,v,e]$ for only two cases: {(\romannumeral 1)} when an edge $(v,v_c,t)$ that matches $(u,u_c)$ arrives or expires, or {(\romannumeral 2)} when $\mathcal{T}[u_c,v_c,e]$ changes in the process of updating $\mathcal{T}$.

\setlength{\textfloatsep}{0pt}
\begin{algorithm}[t]
\small{
\KwIn{a data graph $g$, a query DAG $\hat{q}$, and a new edge $\overline{e}=(v_1,v_2,t)$}
\KwOut{a set of edge pairs $E^{+}_{DCS}$}

$E^{+}_{DCS}\gets\emptyset$\;
$Q\gets$ empty queue\;
\ForEach{$e=(u_1,u_2)\in E(\hat{q})$ \textup{that matches to $\overline{e}$}}{
    \If{$t<\mathcal{T}[u_2,v_2,e]$}{
        $E^{+}_{DCS}\gets E^{+}_{DCS}\cup \{(e,\overline{e})\}$
    }
    \ForEach{\textup{ancestor $e'$ of $e$}}{
        Compute $\mathcal{T}[u_1,v_1,e']$ by Eq. (1)\;
        \If{$\mathcal{T}[u_1,v_1,e']$ \textup{changes}}{
            $Q.push((u_1,v_1,e'))$
        }
    }
    \While{$Q\neq \emptyset$}{
        $(u,v,e')\gets Q.pop$\;
        \ForEach{$u_p\in \textup{Parent}(u)$}{
            \ForEach{$(v_p,v,t')$ \textup{that $(u_p,u)$ can match}}{
                \If{$e'=(u_p,u)$ \textup{and} $t'<\mathcal{T}[u,v,e']$ \textup{and} $((u_p,u),(v_p,v,t'))\notin DCS$}{
                    $E^{+}_{DCS}\gets E^{+}_{DCS}\cup \{((u_p,u),(v_p,v,t'))\}$
                }
                \If{$e'$ \textup{is an ancestor of $(u_p,u)$}}{
                    Compute $\mathcal{T}[u_p,v_p,e']$ by Eq. (1)\;
                     \If{$\mathcal{T}[u_p,v_p,e']$ \textup{changes}}{
                        $Q.push((u_p,v_p,e'))$
                    }
                }
                
            }
        }
    }
}
\KwRet{$E^{+}_{DCS}$}\;

\caption{TCMInsertion}
\label{alg:TCMinsertion}
\vspace*{-1mm}
}
\end{algorithm}

Algorithm~\ref{alg:TCMinsertion} shows the process of updating the portion of $\mathcal{T}$ through the above method and Equation~(\ref{eqn:recurrence_E}) when a new edge $\overline{e}$ arrives. It performs the update process of cases {(\romannumeral 1)} (Lines 4--9) and {(\romannumeral 2)} (Lines 10--19). This procedure also returns a set $E^+_{DCS}$ consisting of pairs of edges $e'\in E(q)$ and $\overline{e'}\in E(g)$, where $e'$ newly becomes a time-constrained matchable edge of $\overline{e'}$ in the process of updating $\mathcal{T}$. 
\deleted{First, it initializes $E^+_{DCS}$ as an empty set and $Q$ as an empty queue (Lines 1--2). 
For an arrived edge $\overline{e}=(v_1, v_2, t)$ and each edge $e=(u_1, u_2) \in E(\hat{q})$ such that $e$ matches $\overline{e}$ (i.e., the endpoints of $e$ and $\overline{e}$ have the same label), $(e, \overline{e})$ is inserted into $E^+_{DCS}$ if $e$ is a TC-matchable edge of $\overline{e}$ by Lemma \ref{lemma:array_E_time-constrained_matchable_edge} (Lines 3--5). Next, for case {(\romannumeral 1)}, it recomputes $\mathcal{T}[u_1, v_1, e']$ for every ancestor $e'$ of $e$ in $\hat{q}$ (Lines 6--7). To handle case {(\romannumeral 2), if $\mathcal{T}[u_1, v_1, e']$ changed, it pushes $(u_1, v_1, e')$ into $Q$ (Lines 8--9).}  Finally, for case {(\romannumeral 2)}, it recomputes the remaining portion of $\mathcal{T}$ by processing all tuples in $Q$. For a tuple $(u, v, e')\in Q$ and an edge pair $((u_p, u), (v_p, v, t'))$ such that $(u_p, u)$ matches $(v_p, v, t')$, it performs a similar process to update $E^+_{DCS}$ and $\mathcal{T}[u_p, v_p, e']$ until $Q$ becomes empty (Lines 10--19).} \texttt{TCMDeletion}, which updates $\mathcal{T}$ when an edge expires, is the same as \texttt{TCMInsertion} except that Line 14 checks if $((u_p,u),(v_p,v,t'))$ is in \texttt{DCS}.

\begin{MyExample}\label{ex:compute_E}
Consider when $\sigma_{14}=(v_4,v_7,14)$ in Figure \ref{fig:temporal_data_graph_g} arrives. Since $\varepsilon_6=(u_3,u_5)$ in Figure \ref{fig:temporal_query_graph_q} can match $\sigma_{14}$, we first check whether $\varepsilon_6$ is a TC-matchable edge of $\sigma_{14}$ regarding $\hat{q}$. Indeed, $\varepsilon_6$ is a TC-matchable edge of $\sigma_{14}$ regarding $\hat{q}$ because $T_G(\sigma_{14})<\mathcal{T}[u_5,v_7,\varepsilon_6]=\infty$. So, we insert $(\varepsilon_6,\sigma_{14})$ into $E_{DCS}^+$. Next, we recompute $\mathcal{T}[u_3,v_4,\varepsilon_2]$. For $u_4$, which is a child of $u_3$, there is only $\sigma_{13}=(v_4,v_5,13)$ which is incident on $v_4$ and can be matched to $\varepsilon_4=(u_3,u_4)$. Thus, the largest min timestamp for $\varepsilon_2$ among weak embeddings of $\hat{q}_{\varepsilon_4}$ is equal to $\mathcal{T}[\varepsilon_4,\sigma_{13},\varepsilon_2]=\min(\mathcal{T}[u_4,v_5,\varepsilon_2], 13)=\min(10,\allowbreak 13)=10$. For the other child $u_5$ of $u_3$, there are two edges $\sigma_7=(v_4,v_7,7)$ and $\sigma_{14}=(v_4,v_7,14)$ which are incident on $v_4$ and can be matched to $\varepsilon_6=(u_3,u_5)$. The max-min timestamps for $\varepsilon_2$ of $\hat{q}_{\varepsilon_6}$ when $\varepsilon_6$ is matched to $\sigma_7$ and $\sigma_{14}$ are $\mathcal{T}[\varepsilon_6,\sigma_{7},\varepsilon_2]=\min(\infty,7)=7$ and $\mathcal{T}[\varepsilon_6,\sigma_{14},\varepsilon_2]=\min(\infty,14)=14$, respectively. Therefore, the largest min timestamp for $\varepsilon_2$ among weak embeddings of $\hat{q}_{\varepsilon_6}$ is $\max(7,14)=14$. Finally, $\mathcal{T}[u_3,v_4,\varepsilon_2]$ becomes the minimum value of 10 among the timestamps 10 and 14 obtained from the children of $u_3$. As $\mathcal{T}[u_3,v_4,\varepsilon_2]$ is updated from 7 to 10, $\varepsilon_2$ becomes a TC-matchable edge of $\sigma_8$, so we insert $(\varepsilon_2,\sigma_8)$ into $E_{DCS}^+$. However, since the timestamp of $\sigma_{12}$ is still greater than $\mathcal{T}[u_3,v_4,\varepsilon_2]=10$, $\varepsilon_2$ does not become a TC-matchable edge of $\sigma_{12}$ and we do not have to insert $(\varepsilon_2,\sigma_{12})$ into $E_{DCS}^+$.
\end{MyExample}

\fullpaper{
\begin{MyLemma}\label{lemma:maxmin-timestamp-space-complexity}
    Given a temporal data graph $G$, a temporal query graph $q$, and a time window $\delta$, let $n_{max}$ and $m_{max}$ be the maximum number of vertices and edges in the temporal data graph $G$ within the time window $\delta$, respectively. Then the space complexity of $\mathcal{T}$ is $O(|E(q)|^2\times m_{max})$.
\end{MyLemma}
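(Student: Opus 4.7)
My plan is to directly bound the number of entries in $\mathcal{T}$ by the product of its index dimensions, then collapse the different index types via the handshake lemma. First I would note that by Definition \ref{def:array_E}, $\mathcal{T}$ comprises two kinds of entries: the vertex-indexed $\mathcal{T}[u,v,e]$ with $u\in V(q)$, $v\in V(G)$, $e\in E(q)$, and the edge-indexed $\mathcal{T}[e',\overline{e'},e]$ with $e',e\in E(q)$, $\overline{e'}\in E(G)$. Because entries for expired edges and for data vertices that no longer appear in any active edge can be discarded, only the active portion of $G$ within the time window $\delta$ contributes to the count, and this portion has at most $n_{max}$ vertices and $m_{max}$ edges.

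The edge-indexed entries contribute at most $|E(q)|\cdot m_{max}\cdot |E(q)| = |E(q)|^{2}\cdot m_{max}$, which already matches the target bound. For the vertex-indexed entries, the naive count $|V(q)|\cdot n_{max}\cdot |E(q)|$ needs to be folded into the same bound. To do this, I would observe that for any isolated query vertex $u$, the sub-DAG $\hat{q}_{u}$ contains no edges, so no $e$ has a temporal descendant in $\hat{q}_{u}$, and hence $\mathcal{T}[u,v,e]$ takes the trivial value $\infty$ by Definition \ref{def:representative_timestamp} and need not be materialized. Restricting to non-isolated vertices on both sides and applying the handshake lemma yields $|V(q)|\le 2|E(q)|$ and $n_{max}\le 2m_{max}$, so $|V(q)|\cdot n_{max}\cdot |E(q)| \le 4|E(q)|^{2}\cdot m_{max} = O(|E(q)|^{2}\cdot m_{max})$.

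Summing the two contributions gives the claimed $O(|E(q)|^{2}\cdot m_{max})$ bound. The only non-mechanical step is the justification that isolated vertex entries can be omitted without affecting the data structure (so that the handshake lemma is applicable to the actually stored dimensions); the remainder is a direct counting argument, and I do not anticipate any further obstacle.
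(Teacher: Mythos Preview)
Your proposal is correct and follows essentially the same decomposition as the paper: split $\mathcal{T}$ into the vertex-indexed part $\mathcal{T}[u,v,e]$ of size $O(|V(q)|\cdot n_{max}\cdot |E(q)|)$ and the edge-indexed part $\mathcal{T}[e',\overline{e'},e]$ of size $O(|E(q)|^2\cdot m_{max})$, then absorb the former into the latter. The only difference is that the paper asserts the absorption $O(|V(q)|\cdot |E(q)|\cdot n_{max}+|E(q)|^2\cdot m_{max})=O(|E(q)|^2\cdot m_{max})$ without further comment, whereas you make it explicit via the isolated-vertex observation and the handshake bound; this is a harmless elaboration of the same argument.
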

\begin{proof}
    There are two types of array: $\mathcal{T}[u,v,e]$ and $\mathcal{T}[e',\overline{e'},e]$. Regarding $\mathcal{T}[u,v,e]$, the variables $u$ and $e$ are bounded by $O(|V(q)|)$ and $O(|E(q)|)$, respectively. Furthermore, it is sufficient to store the max-min time\-stamps at $v$ within the current time window, rather than storing timestamps for all vertices of the temporal data graph. Therefore, the space complexity of $\mathcal{T}[u,v,e]$ is determined as $O(|V(q)|\times |E(q)|\times n_{max})$. Similarly, the space complexity of $\mathcal{T}[e',\overline{e'},e]$ can be expressed as $O(|E(q)|^2\times m_{max})$. Consequently, the overall space complexity of $\mathcal{T}$ is given by $O(|V(q)|\times |E(q)|\times n_{max} +|E(q)|^2\times m_{max})=O(|E(q)|^2\times m_{max})$. 
\end{proof}
}

\fullpaper{
\begin{MyLemma}
    Let $P$ be the set of $(u, v)$ where $\mathcal{T}[u, v, \cdot]$ changes. Then the time complexity of the \texttt{TCMInsertion} and \texttt{TCMDeletion} is $O(d_{max} \times \sum_{p \in P} \deg(p) + d_{max} \times |E(q)|^2)$, where $\deg(p)$ is the number of edges in \texttt{DCS} connected to $p$ \cite{min2021symmetric} and $d_{max}$ is the maximum value among $\deg(p)$.
\end{MyLemma}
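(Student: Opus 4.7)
The plan is to split the cost of \texttt{TCMInsertion} into two phases---Phase A for the outer initial-update loop (Lines 3--9) and Phase B for the while-loop propagation (Lines 10--19)---and to bound each separately. The same analysis will then transfer to \texttt{TCMDeletion} since the only functional difference is the additional $O(1)$-time check $((u_p,u),(v_p,v,t'))\notin\texttt{DCS}$ on Line 14.

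For Phase A, the outer \texttt{foreach} ranges over at most $O(|E(q)|)$ edges $e=(u_1,u_2)\in E(\hat{q})$ that can match $\overline{e}$, and for each such $e$ Lines 6--9 iterate over its ancestors in $\hat{q}$, again $O(|E(q)|)$ many. For every $(e,e')$ pair, Equation (\ref{eqn:recurrence_E}) is invoked once at $(u_1,v_1)$; one invocation enumerates the child-side DCS-neighbors of $(u_1,v_1)$, of which there are at most $d_{max}$ by definition of $d_{max}$. Multiplying yields $O(d_{max}\cdot|E(q)|^2)$ for Phase A, which matches the second summand of the claimed bound.

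For Phase B, I would first argue that each triple $(u,v,e')$ is pushed onto $Q$ at most once: both Lines 9 and 19 guard the push with ``if $\mathcal{T}[\cdot]$ changes,'' and since the new value of $\mathcal{T}[u,v,e']$ is determined from its already-updated descendants via Equation (\ref{eqn:recurrence_E}), it cannot oscillate inside a single call. Consequently every popped triple $(u,v,e')$ has its first two coordinates in $P$. When the triple is popped, Lines 12--13 walk over the parent-side DCS-neighbors of $p=(u,v)$, of which there are $O(\deg(p))$, and each iteration performs $O(1)$ bookkeeping together with at most one invocation of Equation (\ref{eqn:recurrence_E}) at the parent that costs $O(d_{max})$. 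Thus a single pop at $p$ does $O(\deg(p)\cdot d_{max})$ work, and by charging this work to $p\in P$ and then summing over $P$ I obtain the first summand $O(d_{max}\cdot\sum_{p\in P}\deg(p))$ of the claim.

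The main obstacle lies in tightening the ``summing over pops at a fixed $p$'' step: a priori the same $p\in P$ can be the $(u,v)$-coordinate of pops for up to $O(|E(q)|)$ different $e'$s, threatening to introduce a spurious $|E(q)|$ factor. The cleanest resolution I envision is to charge each unit of work not to the triple as a whole but to the individual (triple, DCS-edge) traversal it performs, and then to argue that the total number of such traversals aggregated at any $p$ is amortized to $O(\deg(p))$ rather than $O(|E(q)|\cdot\deg(p))$ by exploiting the observation that once $\mathcal{T}[p,e']$ stabilizes further pops with the same $(p,e')$ are suppressed by the guard on Lines 8 and 18. This amortization is where I would concentrate the bulk of the careful accounting; once established, adding the bounds for Phase A and Phase B gives the stated complexity, and the identical scheme applies to \texttt{TCMDeletion}.
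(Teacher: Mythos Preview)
Your two-phase split (Lines 3--9 versus Lines 10--19) and your Phase~A bound coincide with the paper's proof line for line. For Phase~B the paper likewise charges one pop at $(u,v)$ a cost of $O(d_{max}\cdot\deg((u,v)))$ and then sums over $p\in P$; notably, the paper's proof does \emph{not} address the multiplicity-over-$e'$ issue you raise---it simply writes ``since the while loop is executed for $(u,v)$ where $\mathcal{T}[u,v,\cdot]$ changes'' and passes directly to $O(d_{max}\cdot\sum_{p\in P}\deg(p))$ without discussing how many distinct $e'$ can cause a pop at the same $(u,v)$.

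That said, your proposed patch does not close the gap either. The observation you lean on---that once $\mathcal{T}[p,e']$ stabilizes the guard on Lines~8 and~18 suppresses further pushes of $(p,e')$---is exactly the ``each triple is pushed at most once'' fact you already established earlier in the paragraph; it bounds pops \emph{per triple}, not the number of triples that share the same first two coordinates $p$. With that observation alone the aggregate work at a fixed $p$ is still $O(|E(q)|\cdot d_{max}\cdot\deg(p))$, and no genuine amortization across different $e'$ has been carried out. So you have correctly located a soft spot in the stated bound, but the ``cleanest resolution'' you sketch does not actually eliminate the extra $|E(q)|$ factor; the paper's own argument leaves the same point implicit.
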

\begin{proof}
    The function \texttt{TCMDeletion} is similar to the \texttt{TCMInsertion}, so we will only show the time complexity of \texttt{TCMInsertion} of Algorithm~\ref{alg:TCMinsertion}. First, Lines 4--5 are executed $O(|E(q)|)$ times and Lines 6--9 are executed $O(|E(q)|^2)$ times. Lines 4--5 and 8--9 take a constant time. Line 7 takes $O(d_{max})$ time to update $\mathcal{T}[u_1, v_1, e']$ since we need to check $\mathcal{T}[(u_1, u'), (v_1, v', \cdot), e']$ for every \texttt{DCS} edge $((u_1, u'), (v_1, v', \cdot))$ connected to $(u_1, v_1)$. Therefore, the total execution time of Lines 3--9 is $O(d_{max} \times |E(q)|^2)$. Next, the while loop of Lines 11--19 except Line 17 takes a time proportional to the number of parents of $(u, v)$ in \texttt{DCS}, which is equal to or less than $\deg((u, v))$. Line 17 is executed $O(\deg((u, v)))$ times, and it takes $O(d_{max})$ time for each execution. Since the while loop (Line 10) is executed for $(u, v)$ where $\mathcal{T}[u, v, \cdot]$ changes, the total execution time of Lines 11-19 is $O(d_{max} \times \sum_{p \in P} \deg(p))$. Hence, the time complexity of \texttt{TCMInsertion} is $O(d_{max} \times \sum_{p \in P} \deg(p) + d_{max} \times |E(q)|^2)$, and \texttt{TCMDeletion} has the same time complexity.
\end{proof}
}

\section{Time-Constrained Pruning in Backtracking}\label{sec:backtracking}
In this section, we present three time-constrained pruning techniques in backtracking.
Our matching algorithm gradually extends a partial embedding until it finds a time-constrained embedding like the existing backtracking-based algorithms that solve the continuous subgraph matching problem \cite{min2021symmetric, kim2018turboflux}. To extend a partial embedding $M$, existing backtracking-based algorithms select an unmapped vertex $u$ of a query graph $q$ and match $u$ to each candidate of $u$. 
In time-constrained matching, mapping between edges (rather than mapping between vertices) is essential because each edge has temporal information, and thus we use mapping between edges.
We find a set of candidate edges $(v,v')$ for an edge $(u,u')$ and then match $(u,u')$ to each candidate edge $(v,v')$ to extend a partial embedding $M$.
In backtracking, we prune some candidates $(v,v')$ of $(u,u')$ if it is guaranteed that there is no time-constrained embedding when $(u,u')$ matches $(v,v')$ by considering parallel edges and the temporal order. \fullpaper{Algorithm \ref{alg:FindMatches} shows this backtracking process.}

\fullpaper{
\setlength{\textfloatsep}{0pt}
\begin{algorithm}[t]
\small{
\KwIn{\texttt{DCS}, a data edge $\overline{e}$, and a partial time-constrained embedding $M$}
\KwOut{all occurred/expired time-constrained embeddings including $\overline{e}$}
\uIf{$|M|=|V(q)| + |E(q)|$}{
    Report $M$ as a match\;
}
\uElseIf{$|M|=0$}{
    Let $\overline{e}=(v, v', t)$\;
    \ForEach{$e=(u, u') \in E(q)$ such that $((u, u'), (v, v', t)) \in $ DCS}{
            $M \gets \{ (u, v), (u', v'), (e, \overline{e}) \}$\;
            \texttt{FindMatches}(\texttt{DCS}, $\overline{e}$, $M$)\;
    }
    
}
\Else{
    \eIf{$\exists$ an unmapped query edge with both endpoints mapped}{
        $e \gets$ an edge satisfying the above condition\;
        Compute a set of candidates $EC_M(e)$ of $e$\;
        \ForEach{$\overline{e} \in EC_M(e)$ which is not pruned}{
            $M' \gets M \cup \{(e, \overline{e})\}$\;
            \texttt{FindMatches}(\texttt{DCS}, $\overline{e}$, $M'$)\;
        }
    }{
        $u \gets$ next vertex according to the matching order in \cite{min2021symmetric}\;
        Compute a set of candidates $C_M(u)$ of $u$ as in \cite{min2021symmetric}\;
        \ForEach{$v \in C_M(u)$}{
            $M' \gets M \cup \{(u, v)\}$\;
            \texttt{FindMatches}(\texttt{DCS}, $\overline{e}$, $M'$)\;
        }
    }
}

\caption{FindMatches}
\label{alg:FindMatches}
\vspace*{-1mm}
}
\end{algorithm}
}

\begin{figure}
\centering
    \includegraphics[width=0.9\linewidth]{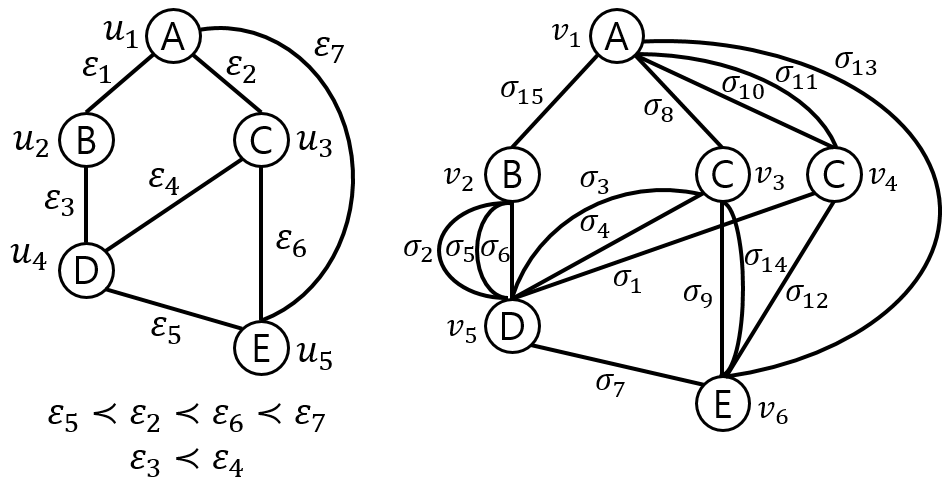}
\caption{{A new running example of a temporal query graph $q$ and a temporal data graph $G$}}
\label{fig:new_example}
\end{figure}

\begin{MyDefinition}\label{def:related_edges}
For each edge $e\in E(q)$ and a (partial) embedding $M$, a set $R_M^{+}(e)$ is the set of edges $e'\in E(q)$ that are temporally related to $e$ (i.e., $e\prec e'$ or $e'\prec e$) and are members of $M$. Conversely, a set $R_M^{-}(e)$ is the set of edges $e'\in E(q)$ that are temporally related to $e$ and are not members of $M$.
\end{MyDefinition}

\begin{MyDefinition}\label{def:extendable_edge_candidates}
A set $EC_M(e)$ of candidate edges of $e=(u_1,u_2)\in E(q)$ regarding a partial time-constrained embedding $M$ is defined as the set of edges in $E(G)$ between $M(u_1)$ and $M(u_2)$ that are TC-matchable edges for $e$ and satisfy the temporal relationships with $M(e')$ where $e'\in R_M^{+}(e)$.
\end{MyDefinition}

As a new running example, we use a temporal query graph $q$ and a temporal data graph $G$ in Figure \ref{fig:new_example}.

\begin{MyExample}
Consider the partial time-constrained embedding $M_1=\{(\varepsilon_1,\sigma_{15}),(\varepsilon_2,\sigma_8)\}$ and $\varepsilon_3$ as the next matching edge. Then $R_{M_1}^+(\varepsilon_3)=\emptyset$, $R_{M_1}^-(\varepsilon_3)=\{\varepsilon_4\}$, and $EC_{M_1}(\varepsilon_3)=\{\sigma_2,\sigma_5,\sigma_6\}$. If we extend $M_1$ to $M_2=M_1\cup \{(\varepsilon_3,\sigma_2)\}$ by matching $\varepsilon_3$ to $\sigma_2$, then $R_{M_2}^+(\varepsilon_4)=\{\varepsilon_3\}$, $R_{M_2}^-(\varepsilon_4)=\emptyset$, and $EC_{M_2}(\varepsilon_4)=\{\sigma_3,\sigma_4\}$ for the next matching edge $\varepsilon_4$. If $\varepsilon_3$ matches $\sigma_5$ instead of $\sigma_2$, then $EC_{M_2'}(\varepsilon_4)=\emptyset$ where $M_2'=M_1\cup \{(\varepsilon_3,\sigma_5)\}$ because $T_G(\sigma_5)\nless T_G(\sigma_3)$ and $T_G(\sigma_5)\nless T_G(\sigma_4)$ while $\varepsilon_3\prec \varepsilon_4$.
\end{MyExample}

When there are multiple candidate edges in $EC_M(e)$, we introduce a method to prune some candidate edges according to the three cases of $R_M^-(e)$ to reduce the search space. 

\fullpaper{
\begin{figure*}[t]
\centering
    \includegraphics[width=0.63\textwidth]{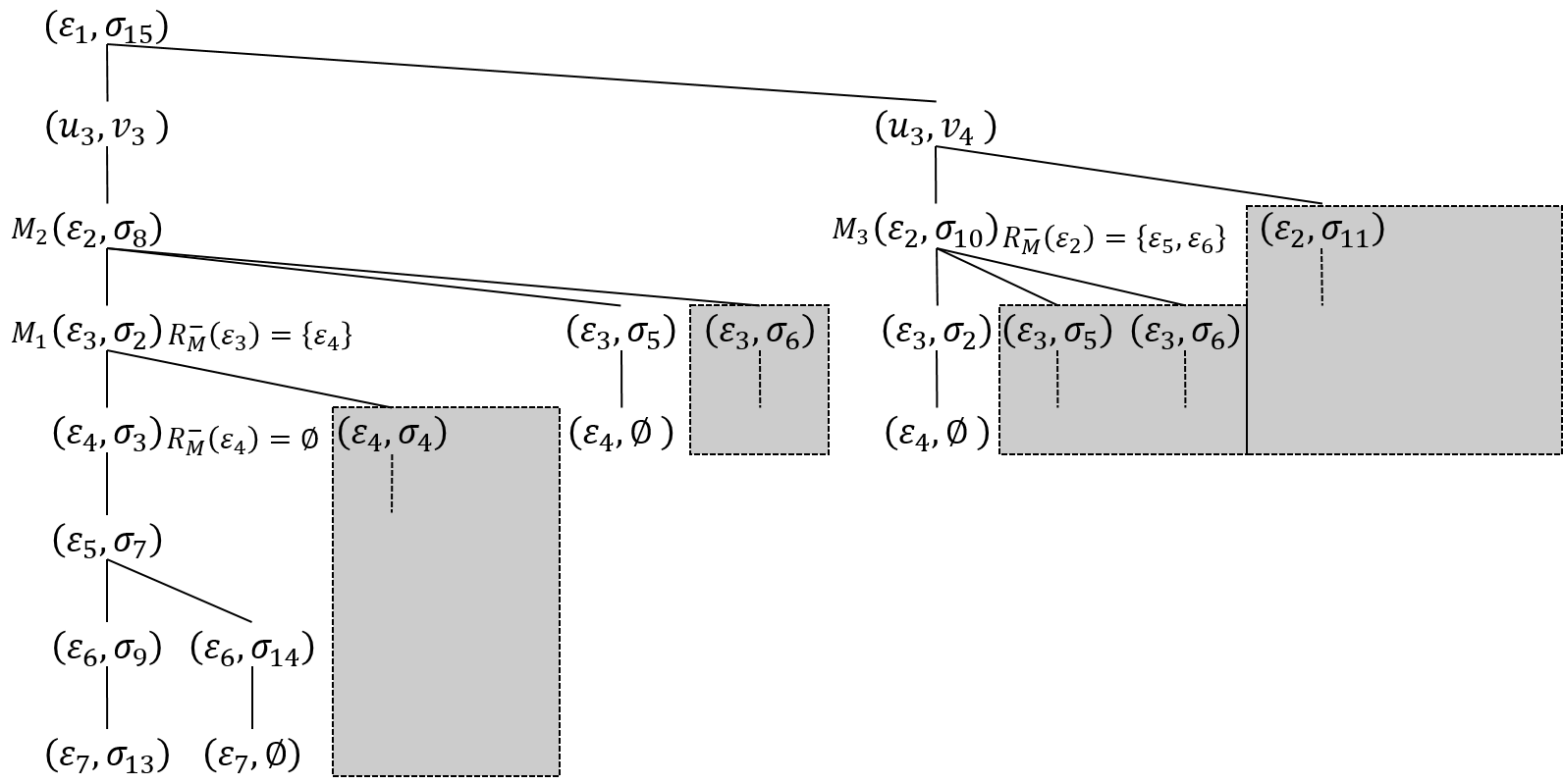}
\caption{{Search tree when $\sigma_{15}$ arrives. Nodes enclosed by dashed boxes are pruned}}
\label{fig:search_tree}
\end{figure*}
}

First, if $R_M^-(e)$ is an empty set (i.e., no temporally related edges remain), the search trees are the same no matter which edge of $EC_M(e)$ matches $e$ to extend $M$.
Therefore, we visit the node $M\cup \{(e,\overline{e})\}$ for only one candidate edge $\overline{e}\in EC_M(e)$ rather than extending $M$ over all candidate edges in $EC_M(e)$. When a time-constrained embedding is found in the subtree rooted at $M\cup \{(e,\overline{e})\}$, we can find other time-constrained embeddings without exploring other subtrees by matching $e$ to another candidate edge in $EC_M(e)$.

Second, consider $R_M^-(e)$ where all edges in $R_M^-(e)$ have the same temporal relationship with $e$. If $e\prec e'$ for all $e'\in R_M^-(e)$, we match $e$ to candidate edges in $EC_M(e)$ in chronological order. If no time-constrained embedding is found when we match $e$ to $\overline{e}$, we can skip the candidate edges after $\overline{e}$ in $EC_M(e)$. Conversely, when $e\succ e'$ for all $e'\in R_M^-(e)$, we match $e$ to candidate edges in $EC_M(e)$ in reverse chronological order.

Finally, when $R_M^-(e)$ is not in the previous two cases, we consider the case where we failed to find any time-constrained embedding in the search tree rooted at $M=\{\ldots, (e, \overline{e})\}$ whose last mapping is $(e, \overline{e})$, and prune the other candidate edges if possible. If $\overline{e}$ did not cause any failures related to the ordered pair $e$ and $e'\in R_M^-(e)$, we can prune other candidate edges of $e$. For each search tree node $M$, we compute the set of query edges related to the failures in the search tree rooted at $M$ and utilize it to prune other candidates.

\begin{MyDefinition}\label{def:temporal_failing_set}
Let $M$ be a search tree node whose last mapped query edge is $e$ and there is no time-constrained embeddings in the subtree rooted at $M$. A \textit{temporal failing set} $TF_M \subseteq E(q)$ of node $M$ is defined as follows:
\begin{enumerate}[1.]
    \item If the node $M$ is a leaf node (i.e., $M=\{\ldots, (e, \emptyset)\}$),  $TF_M=R_M^+(e)$.
    \item Otherwise,
    \begin{enumerate}[2.1]
        \item If there exists a child node $M_i=M \cup \{(e_i, \overline{e_i})\}$ of $M$ such that $e_i \notin TF_{M_i}$, we set $TF_M=TF_{M_i} \cup R_M^+(e)$.
        \item Otherwise, $TF_M=\bigcup_{i=1}^{k} TF_{M_i} \cup R_M^+(e)$, where $M_1,\ldots , M_k$ are the children of $M$.
    \end{enumerate}
\end{enumerate}
\end{MyDefinition}
For the search tree node $M$ whose last mapping is $(e, \overline{e})$, we can prune the other candidate edges of $e$ by testing whether $e$ is in the temporal failing set $TF_{M}$ of $M$. If $e \notin TF_{M}$, $\overline{e}$ did not cause any failures related to $e$, so the sibling nodes of $M$, i.e., $M - \{(e, \overline{e})\} \cup \{(e, \overline{e}')\}$ for every $\overline{e}' \in EC_M(e)$, can be pruned.\icdepaper{\footnote{An example for the pruning method is provided in the full version \cite{fullversion}.}}

\fullpaper{
\begin{MyExample}\label{ex:search_tree}
Figure \ref{fig:search_tree} shows the search tree of $q$ and $G$ in Figure \ref{fig:new_example} when $\sigma_{15}$ arrives. In this example, to focus on pruning techniques in backtracking, we do not consider the filtering in Section \ref{sec:filtering_by_temporal_order_constraint} and also use an arbitrary matching order. Except for $u_3$, since there is only one vertex in $G$ that can be matched to a query vertex, we omit the mappings of these vertices from the search tree. Furthermore, when representing a mapping $M$, we enumerate the pairs in $M$ in the order in which they are added to $M$.

First, consider the partial time-constrained embedding $M_1=\{(\varepsilon_1,\sigma_{15}),(\varepsilon_2,\sigma_8),(\varepsilon_3,\sigma_2)\}$ in Figure \ref{fig:search_tree} and $\varepsilon_4$ is to be matched next. At this time, since $R^-_{M_1}(\varepsilon_4)=\emptyset$ and $EC_{M_1}(\varepsilon_4)=\{\sigma_3,\sigma_4\}$, the subtrees rooted at $M_1\cup \{(\varepsilon_4,\sigma_3)\}$ and $M_1\cup \{(\varepsilon_4,\sigma_4)\}$ will be the same. Thus, we visit only the subtree rooted at $M_1\cup \{(\varepsilon_4,\sigma_3)\}$. When we find the time-constrained embedding $\{(\varepsilon_1,\sigma_{15}),(\varepsilon_2,\sigma_8),(\varepsilon_3,\sigma_2),(\varepsilon_4,\allowbreak\sigma_3),(\varepsilon_5,\allowbreak\sigma_7),(\varepsilon_6,\sigma_{9}),(\varepsilon_7,\sigma_{14})\}$ in that subtree, we replace $\sigma_3$ with $\sigma_4$ to find the other time-constrained embedding without visiting the other subtree rooted at $M_1\cup \{(\varepsilon_4,\sigma_4)\}$. 

Next, suppose that we came back to the node $M_2=\{(\varepsilon_1,\sigma_{15}),\allowbreak(\varepsilon_2,\sigma_8)\}$ after the exploration of the subtree rooted at $M_1=M_2\cup \{(\varepsilon_3,\sigma_2)\}$. Since all edges $\varepsilon$ in  $R^-_{M_2}(\varepsilon_3)=\{\varepsilon_4\}$ satisfy $\varepsilon_3\prec \varepsilon$, we match $\varepsilon_3$ to $\sigma_5$, which is the next edge of $EC_{M_2}(\varepsilon_3)=\{\sigma_2,\sigma_5,\sigma_6\}$ in chronological order. In the subtree rooted at $M_2\cup \{(\varepsilon_3,\sigma_5)\}$, there is no time-constrained embedding because there is no edge in $G$ where $\varepsilon_4$ can be matched (i.e., $EC_{M_2\cup \{(\varepsilon_3,\sigma_5)\}}(\varepsilon_4)=\emptyset$). When we return to the node $M_2$, we know that time-constrained embeddings do not exist even in the subtree rooted at $M_2\cup \{(\varepsilon_3,\sigma_6)\}$, so we prune that subtree.

Now, suppose that we explored the subtree rooted at $M_3=\{(\varepsilon_1,\sigma_{15}),(\varepsilon_2,\sigma_{10})\}$ and came back to the node $M_3$. We failed to find time-constrained embeddings and the failure occurred in $\varepsilon_4$, which is temporally related to $\varepsilon_3$. The same failure will occur even if we match $\varepsilon_2$ to $\sigma_{11}$ instead of $\sigma_{10}$ because the failure is related only to $\varepsilon_3$ and not to $\varepsilon_2$. Therefore, we prune the subtree rooted at $M_3-\{(\varepsilon_2,\sigma_{10})\}\cup\{(\varepsilon_2,\sigma_{11})\}$.
\end{MyExample}
}

\fullpaper{
\begin{MyLemma}
    The space complexity of Algorithm~\ref{alg:FindMatches} is $O(|E(q)| \times m_{max})$, where $n_{max}$ and $m_{max}$ are the same as defined in Lemma~\ref{lemma:maxmin-timestamp-space-complexity}.
\end{MyLemma}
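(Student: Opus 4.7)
The plan is to account for the space used at each recursion level and then multiply by the maximum recursion depth. Concretely, I would identify the four items that occupy non-trivial space during a single invocation of \texttt{FindMatches}: the partial embedding $M$ being extended, the candidate set $EC_M(e)$ (or $C_M(u)$ in the vertex branch), the temporal failing set $TF_M$ used for the pruning rule of Definition~\ref{def:temporal_failing_set}, and bookkeeping for the enumeration loop (an iterator plus a constant number of scalars).

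First I would bound these per-level contributions. The partial embedding $M$ contains at most one pair per query vertex and one per query edge, so $|M| = O(|V(q)| + |E(q)|) = O(|E(q)|)$; however, $M$ can be maintained in place (adding one pair on the way down, removing it on the way up), so it contributes $O(|E(q)|)$ in total across all levels rather than per level. The candidate set $EC_M(e)$ is a subset of parallel edges between two images in $g$ and is therefore bounded by $m_{max}$; similarly $C_M(u)$ is bounded by $n_{max} = O(m_{max})$. The temporal failing set $TF_M$ is a subset of $E(q)$, contributing $O(|E(q)|)$ per level, which is dominated by $m_{max}$ whenever the window contains at least as many edges as the query.

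Next I would bound the recursion depth. Each recursive call adds either an edge mapping $(e,\overline{e})$ or a vertex mapping $(u,v)$ to $M$, and no mapping is ever removed while descending. Since the total number of mappings in a complete embedding is $|V(q)|+|E(q)| = O(|E(q)|)$, the call stack has depth $O(|E(q)|)$. Multiplying the per-level cost $O(m_{max})$ by the depth $O(|E(q)|)$ yields the claimed bound $O(|E(q)|\times m_{max})$.

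The only subtle point, and what I would treat most carefully, is ensuring that we do not double-count $M$ or the shared \texttt{DCS} (which is external to \texttt{FindMatches} and already accounted for in Lemma~\ref{lemma:maxmin-timestamp-space-complexity}). I would make explicit that $M$ is maintained by one incremental update per recursive call rather than copied, so that the only truly per-level allocations are $EC_M(e)$ (or $C_M(u)$) and $TF_M$, each $O(m_{max})$. This then gives the total space $O(|E(q)|\times m_{max})$ as stated.
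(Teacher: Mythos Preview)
Your proposal is correct and follows essentially the same approach as the paper: bound the per-level space by the size of the candidate set ($O(m_{max})$, with $n_{max}=O(m_{max})$) plus the temporal failing set ($O(|E(q)|)$), bound the recursion depth by $|V(q)|+|E(q)|=O(|E(q)|)$, and then use the assumption $m_{max}\ge |E(q)|$ to absorb the $|E(q)|^2$ term. Your additional remark that $M$ is maintained in place rather than copied is a nice clarification the paper leaves implicit.
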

\begin{proof}
    For each search tree node $M$, we map a vertex $u \in V(q)$ or an edge $e \in E(q)$ to its candidate vertex or candidate edge. If a vertex $u$ is mapped in $M$, we need $O(n_{max})$ space to store the set of candidate vertices $C_M(u)$ \cite{min2021symmetric} of $u$. Otherwise (i.e., an edge $e$ is mapped in $M$), we need $O(m_{max})$ space to store the set of candidate edges $EC_M(e)$ of $e$. Additionally, we store the temporal failing set $TF_M$ of $M$, which requires $O(|E(q)|)$ space. To sum up, each node requires $O(\max(n_{max}, m_{max}) + |E(q)|)$ space and the maximum depth of the search tree is $|V(q)| + |E(q)|$, resulting the space complexity $O((|E(q)| + m_{max}) \times |E(q)|)$ of Algorithm~\ref{alg:FindMatches}. 
    Since we need to invoke \texttt{FindMatches} only when $m_{max} \ge |E(q)|$, we obtain the space complexity $O(|E(q)| \times m_{max})$.
\end{proof}
}

\begingroup
\setlength{\tabcolsep}{4.5pt} %
\begin{table}[h]
    \centering
    \begin{tabular}{ccc}
        \toprule
        Function & Time complexity  & Space complexity\\
        \midrule
        \texttt{TCMUpdate} & $O(d_{max}\times\Sigma_{p\in P_1}\deg(p) +$  & $O(|E(q)|^2\times m_{max})$ \\
        &  $d_{max}\times|E(q)|^2)$ & \\
        \texttt{DCSUpdate} & $O(\Sigma_{p\in P_2}\deg(p) + |E_{DCS}^+|)$ & $O(|E(q)|\times n_{max})$ \\
        \texttt{FindMatches} & exponential & $O(|E(q)|\times m_{max})$\\
        \bottomrule
    \end{tabular}
    \caption{Time and space complexities of each function}
    \label{tab:complexity}
\end{table}
\endgroup

\icdepaper{
Table~\ref{tab:complexity} presents the time and space complexities of each function, where \texttt{Update} means both \texttt{Insertion} and \texttt{Deletion}. 
In time complexity, $P_1$ is the set of $(u, v)$ where the max-min timestamp $\mathcal{T}[u, v, \cdot]$ changes, $P_2$ is the set of $(u, v)$ where the value of $(u,v)$ in \texttt{DCS} changes, $\deg(p)$ is the number of edges in \texttt{DCS} connected to $p$, and $d_{max}$ is the maximum value among $\deg(p)$. In space complexity, $n_{max}$ and $m_{max}$ denote the maximum number of vertices and edges in the data graph $G$ within the time window $\delta$, respectively.}

The time and space complexities of \texttt{DCSUpdate} are the same as those in \cite{min2021symmetric}. 
Compared to \texttt{DCSUpdate}, \texttt{TCMUpdate} introduces an additional term $d_{max}$ to update $\mathcal{T}$ (Lines 7 and 17 of Algorithm~\ref{alg:TCMinsertion}). 
While \texttt{TCMUpdate} and \texttt{DCSUpdate} achieve polynomial time complexities, the exponential time complexity of \texttt{FindMatches} is inevitable due to the NP-hardness of the problem.  The space complexity of \texttt{TCMUpdate} 
is the size of the max-min timestamp $\mathcal{T}$, and
that of \texttt{FindMatches} is the depth of the search tree (i.e., $|E(q)|$) times the number of candidate edges $EC_M(e)$ for an edge $e \in E(q)$ which is bounded by $m_{max}$.

\fullpaper{
Table~\ref{tab:complexity} presents the time and space complexities of each function, where \texttt{Update} means both \texttt{Insertion} and \texttt{Deletion}. 
}

\section{Performance Evaluation}\label{sec:performance}

In this section, we evaluate the performance of our algorithm against \texttt{Timing} \cite{li2019time} (there is no source code available for \texttt{Hasse} \cite{sun2017hasse}). Since our algorithm uses the data structure \texttt{DCS} in \texttt{SymBi} \cite{min2021symmetric}, we implemented our algorithm by extending data structures (\texttt{DCS} and max-min timestamp) to incorporate temporal information and adding proposed techniques to \texttt{SymBi}.
Furthermore, we modified the state-of-the-art continuous subgraph matching algorithms (\texttt{SymBi} \cite{min2021symmetric} and \texttt{RapidFlow} \cite{sun2022rapidflow}) to solve our problem by additionally checking whether the embeddings found by \texttt{SymBi} and \texttt{RapidFlow} satisfy the temporal order, and included them in our comparisons.
Experiments are conducted on a CentOS machine with two Intel Xeon E5-2680 v3 2.50GHz CPUs and 256 GB memory. The source codes of \texttt{Timing}, \texttt{SymBi}, and \texttt{RapidFlow} were obtained from the authors and all methods are implemented in C++. Unlike the implementation of other comparative algorithms, \texttt{RapidFlow} assumes undirected edges in both data graph and query graph. We modified \texttt{RapidFlow} to output only the embeddings with correct directions.

\noindent\textbf{Datasets.} We use six datasets referred to as Netflow, Wiki-talk, Superuser, StackOverflow, Yahoo, and LSBench. Netflow is the anonymized network passive traffic data from CAIDA Internet Annoymized Traces 2015 Dataset \cite{caida2015}. Edges in Netflow are labeled with a triple of source ports, protocols, and destination ports.
Wiki-talk is a temporal network representing Wikipedia users editing each other's talk page from the Standford SNAP library \cite{snapstanford}. We label the vertices in Wiki-talk as the first character of the user's name as in \cite{li2019time}.
Superuser and StackOverflow are networks of user-to-user interactions on the stack exchange websites \cite{snapstanford}. The edges are labeled based on the type of interaction: (1) a user answering another user's question, (2) a user commenting on another user's question, and (3) a user commenting on another user's answer. 
Yahoo is a social network that captures the communication between users via Yahoo Messenger \cite{rossi2015network}. 
LSBench is synthetic social media stream data generated by the Linked Stream Benchmark data generator \cite{le2012linked}.  We set the number of users to 0.1 million and applied the default settings for other parameters.
For Superuser, StackOverflow, and Yahoo, we randomly label the vertices among five distinct labels so that a reasonable number of queries can finish within the time limit.
The characteristics of the datasets are summarized in Table \ref{tab:characteristics_datasets}, including the number of vertices $|V|$, the number of edges $|E|$, the number of distinct vertex labels $|\Sigma_V|$, the number of distinct edge labels $|\Sigma_E|$, the average degree $d_{avg}$, and the average number of parallel edges between a pair of adjacent vertices $m_{avg}$.

\addtolength{\tabcolsep}{-1pt}
\begin{table}[h]
    \centering
    \caption{Characteristics of datasets}
    \vspace*{-1mm}
    \begin{tabular}{ccccccc}
        \toprule
        Datasets & $|V|$ & $|E|$ & $|\Sigma_V|$ & $|\Sigma_E|$ & $d_{avg}$ & $m_{avg}$ \\ 
        \midrule
        Netflow     & 0.37M     & 15.96M    & 1     & 346,672    & 85.4     & 27.6     \\
        Wiki-talk   & 1.14M     & 7.83M     & 365   & 1         & 13.7     & 2.37      \\
        Superuser     & 0.19M    & 1.44M    & 5    & 3        & 14.9      & 1.56      \\ 
        StackOverflow     & 2.60M    & 63.50M    & 5    & 3        & 48.8      & 1.75      \\ 
        Yahoo     & 0.10M    & 3.18M    & 5    & 1        & 63.6      & 3.51      \\ 
        LSBench     & 13.12M    & 21.04M    & 11    & 19        & 3.21      & 1.00      \\
        \bottomrule
    \end{tabular}
    \vspace{-2mm}
    \label{tab:characteristics_datasets}
\end{table}
\addtolength{\tabcolsep}{1pt}

\noindent\textbf{Queries.} We adopt a query generation method in the previous study \cite{li2019time}. To generate query graphs, we first traverse the data graph by random walk. To ensure that there is a time-constrained embedding of the query graph in the data graph, the temporal order is determined as follows. We create a random permutation of edges in the query graph $q$ generated by random walk, and then for any two edges $e,e'\in E(q)$, we set $e\prec e'$ if (1) $e$ precedes $e'$ in the permutation and (2) the timestamp of $e$ is less than that of $e'$.

For each dataset, we set six different query sizes: 5, 7, 9, 11, 13, 15. The size of a query is defined as the number of edges. We generate 100 queries as mentioned above for each dataset and query size. For each query graph, we create 5 different temporal orders. The density of a temporal order is defined as the number of edge pairs with temporal relationships divided by the number of edge pairs in the query graph. To see the difference in performance according to the density of the temporal order, we generate two temporal orders with densities of 0 (empty) and 1 (total order), and the others to have densities close to 0.25, 0.50, and 0.75.

\noindent\textbf{Performance Measurement} We measure the elapsed time of time-constrained continuous subgraph matching for a dataset and a query graph. Since this problem is NP-hard, some queries may not finish in a reasonable time. Therefore, we set a time limit of 1 hour for each query. If an algorithm does not finish a query within the time limit, we regard the elapsed time of the query as 1 hour. We say that a query graph is \textit{solved} if it ends within the time limit. Each query set consists of 100 query graphs with a same size. For each query set, we measure the average elapsed time and the number of solved query graphs. For a reasonable comparison, we compute the average time excluding query graphs that all algorithms failed to solve. In addition, we measure the peak memory usage using the ``ps'' utility to compare the memory usage of programs.

\subsection{Experimental Results}
We evaluate the performance of the algorithms by varying the query size, the density, and the time window size. 
Because each dataset has a different time span, we set each unit of the window size as the average time span between two consecutive edges in the dataset. We used 5 different window sizes in our experiments: 10k, 20k, 30k, 40k, 50k. Table \ref{tab:experiment_settings} shows the parameters used in the experiments. Values in boldface are used as default parameters.

\begin{table}[h]
    \centering
    \caption{Experiment settings}
    \vspace*{-1mm}
    \begin{tabular}{cc}
    \toprule
    \textbf{Parameter} & \textbf{Value Used} \\ 
    \midrule
    Datasets & Netflow, Wiki-talk, Superuser, \\ 
        & StackOverflow, Yahoo, LSBench \\
    Query size & 5, 7, \textbf{9}, 11, 13, 15 \\
    Density & 0, 0.25 \textbf{0.50}, 0.75, 1 \\ 
    Window size & 10k, 20k, \textbf{30k}, 40k, 50k \\ 
    \bottomrule
    \end{tabular}
    \label{tab:experiment_settings}
\end{table}

\begin{figure*}[t]
    \centering
    \begin{subfigure}{\linewidth}
        \centering
        \includegraphics[scale=0.18]{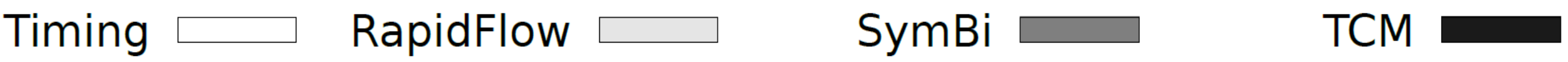}
    \end{subfigure}
    \subcaptionbox{Average elapsed time\label{exp:vary_query_size_time}}{
        \includegraphics[width=0.98\linewidth]{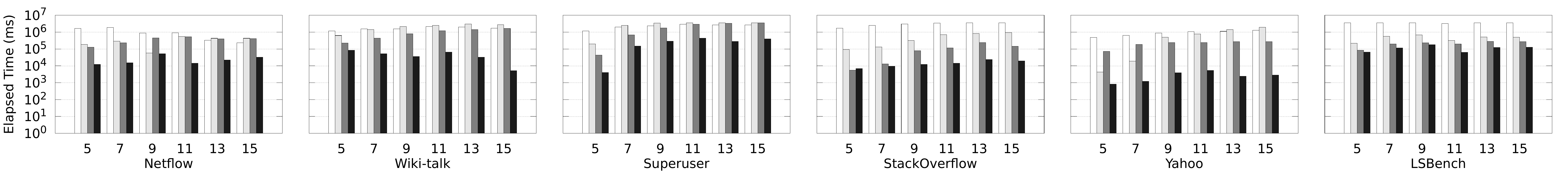}
        \vspace*{-1mm}
    }
    \subcaptionbox{Solved queries\label{exp:vary_query_size_solved}}{
        \includegraphics[width=0.98\linewidth]{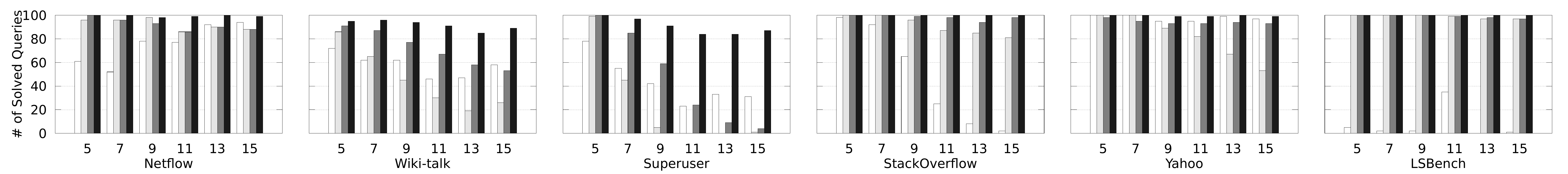}
        \vspace*{-1mm}
    }
    \vspace*{-2mm}
    \caption{Query processing time and the number of solved queries for varying query size}
    \label{exp:vary_query_size}
    \vspace*{-3mm}
\end{figure*}

\begin{figure*}[t]
    \centering
    \begin{subfigure}{\linewidth}
        \centering
        \includegraphics[scale=0.18]{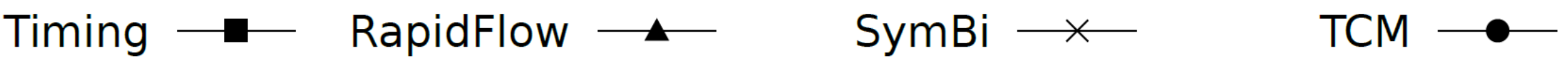}
    \end{subfigure}
    \subcaptionbox{Average elapsed time\label{exp:vary_tc_rate_time}}{
        \includegraphics[width=0.98\linewidth]{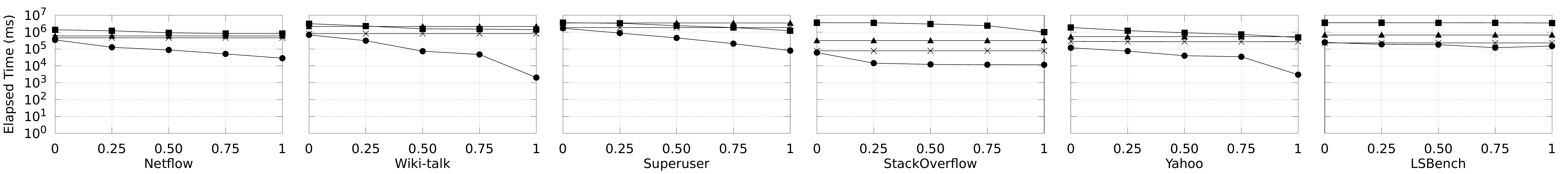}
        \vspace*{-1mm}
    }
    \subcaptionbox{Solved queries\label{exp:vary_tc_rate_solved}}{
        \includegraphics[width=0.98\linewidth]{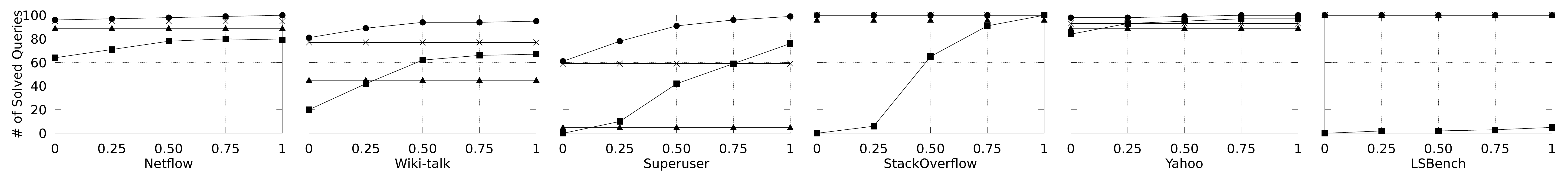}
        \vspace*{-1mm}
    }
    \vspace*{-2mm}
    \caption{Query processing time and the number of solved queries for varying density}
    \label{exp:vary_tc_rate}
    \vspace*{-3mm}
\end{figure*}

\noindent\textbf{Varying the query size.} 
Figure \ref{exp:vary_query_size} shows the performance results for varying the query size. We set the density to 0.50 and the window size to 30k. 

Our algorithm \texttt{TCM} outperforms the other algorithms in terms of both query processing time and the number of solved queries for all datasets and query sizes. Specifically, \texttt{TCM} is more than two orders of magnitude faster than \texttt{Timing} in terms of query processing time for all query sizes in both StackOverflow and Yahoo. Moreover, \texttt{TCM} is usually more than 10 times faster than \texttt{RapidFlow} and \texttt{SymBi} on most datasets and query sizes, and achieves more than 100 times speed up on large query sizes in Yahoo. With respect to the number of solved queries, in LSBench, \texttt{Timing} rarely solves queries except when the query size is 11, while \texttt{TCM} successfully solves all queries in all query sizes. Furthermore, \texttt{TCM} continues to solve a significant number of queries as the query size increases, whereas the number of queries solved by the other three algorithms drops sharply in most datasets. Each of the three comparing algorithms solves a large number of queries when the query size is small, but it solves less than 10 queries when the query size increases for some datasets.  These results suggest that \texttt{TCM} is a robust algorithm as it performs well across different datasets, while the other algorithms' performance varies significantly depending on the dataset.

In Wiki-talk, the performance gap between \texttt{TCM} and the other algorithms widens as the query size increases. As the query size increases, the query processing time of \texttt{TCM} decreases, while the query processing time of \texttt{Timing}, \texttt{RapidFlow}, and \texttt{SymBi} generally increases. This is because as the query size increases, the time constraints associated with one edge in the query graph increase, leading to better filtering and pruning in \texttt{TCM}. 
The following experiments with varying the density show that Wiki-talk is more affected by time constraints than other datasets.

\begin{figure*}[t]
    \centering
    \begin{subfigure}{\linewidth}
        \centering
        \includegraphics[scale=0.18]{figures/vary_tc_rate_key.pdf}
    \end{subfigure}
    \subcaptionbox{Average elapsed time\label{exp:vary_window_size_time}}{
        \includegraphics[width=0.98\linewidth]{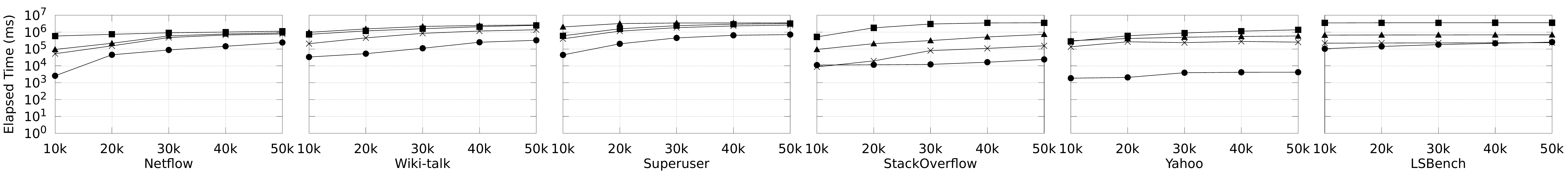}
        \vspace*{-1mm}
    }
    \subcaptionbox{Solved queries\label{exp:vary_window_size_solved}}{
        \includegraphics[width=0.98\linewidth]{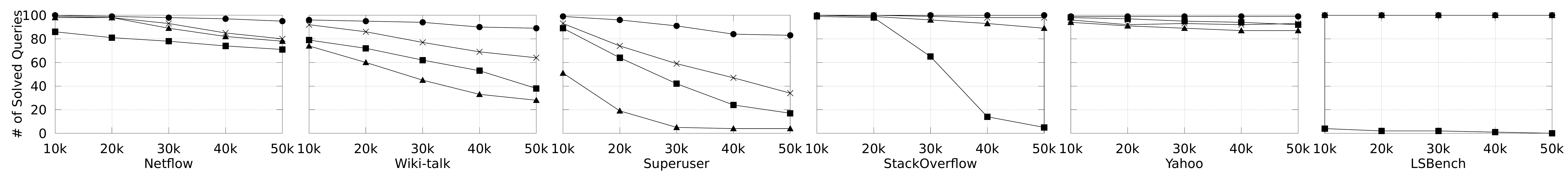}
        \vspace*{-1mm}
    }
    \vspace*{-2mm}
    \caption{Query processing time and the number of solved queries for varying window size}
    \label{exp:vary_window_size}
    \vspace*{-5mm}
\end{figure*}

\begin{figure*}[t]
    \centering
    \begin{subfigure}{\linewidth}
        \centering
        \includegraphics[scale=0.18]{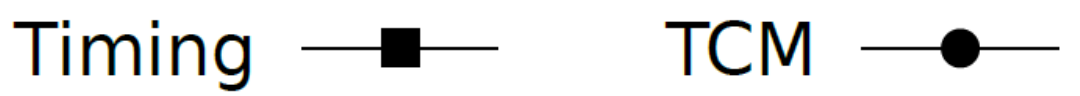}
    \end{subfigure}
    \begin{subfigure}{\linewidth}
        \centering
        \includegraphics[width=0.98\linewidth]{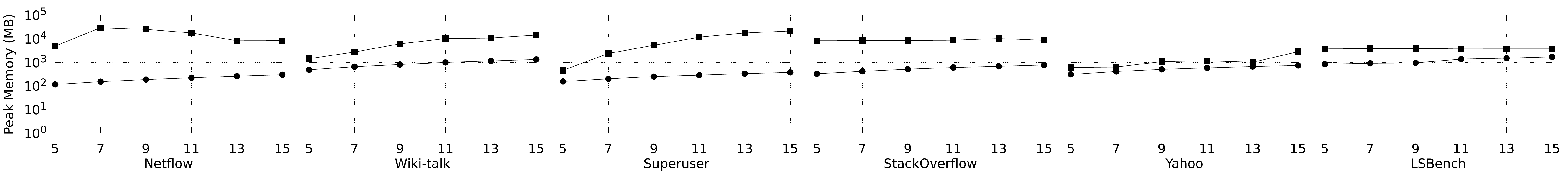}
    \end{subfigure}
    \vspace*{-4mm}
    \caption{Average peak memory for varying query size}
    \label{exp:memory_usage}
    \vspace*{-3mm}
\end{figure*}

\begin{figure*}[t]
    \centering
    \begin{subfigure}{\linewidth}
        \centering
        \includegraphics[scale=0.18]{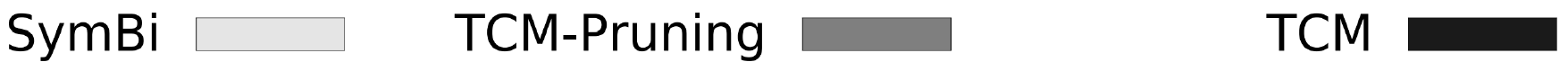}
    \end{subfigure}
    \subcaptionbox{Average elapsed time\label{exp:techniques_time}}{
        \includegraphics[width=0.98\linewidth]{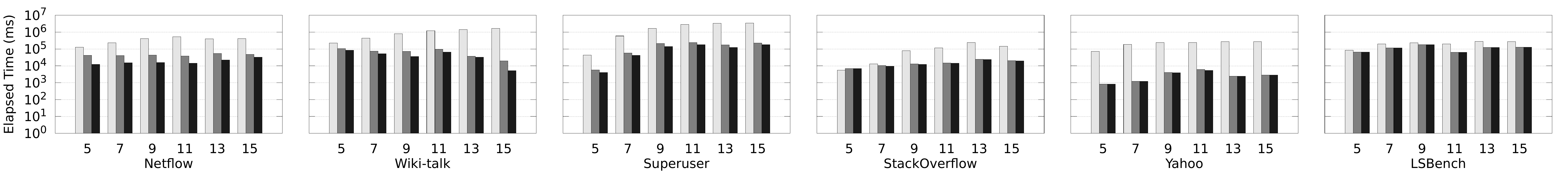}
        \vspace*{-1mm}
    }
    \subcaptionbox{Solved queries\label{exp:techniques_solved}}{
        \includegraphics[width=0.98\linewidth]{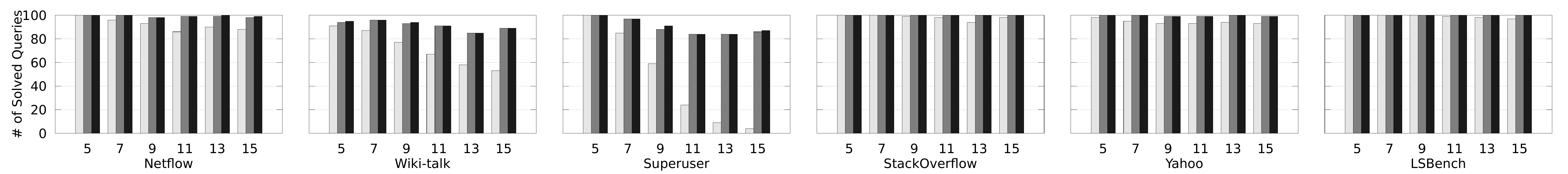}
        \vspace*{-1mm}
    }
    \vspace*{-2mm}
    \caption{Evaluating techniques for varying query size}
    \label{exp:techniques}
    \vspace*{-5mm}
\end{figure*}

\noindent\textbf{Varying the density.} Next, we fix the query size to 9 and the window size to 30k, and vary the density from 0 to 1 as shown in Figure \ref{exp:vary_tc_rate}. To see the effect of the density, we exclude queries that all algorithms failed to solve in every density when computing the average query processing time. 

Both \texttt{RapidFlow} and \texttt{SymBi} find embeddings without considering a temporal order to solve the problem, so the query processing time does not change even if the density changes. On the other hand, \texttt{Timing} and \texttt{TCM} generally decrease as the density increases. Figure \ref{exp:vary_tc_rate_time} shows that the query processing time of \texttt{TCM} is noticeably reduced as the density increases. In contrast, the query processing time of \texttt{Timing} does not decrease significantly like \texttt{TCM}. 
As a result, \texttt{TCM} outperforms \texttt{Timing} in all six datasets, with the performance gap increasing as the density increases. Specifically, when the density is 1, \texttt{TCM} is several hundred times faster than \texttt{Timing} in Wiki-talk and several times faster in the other datasets. Moreover, even when the density is 0, \texttt{TCM} still achieves a notable performance improvement compared to \texttt{Timing}.
These results show that \texttt{TCM} utilizes the temporal order better than \texttt{Timing} to solve the time-constrained continuous subgraph matching problem. Also, this supports that query processing time is more affected by time constraints in Wiki-talk than in other datasets.

\noindent\textbf{Varying the window size.} In this experiment, we examine the effect of the window size on the performance by varying it from 10k to 50k, while using a query size of 9 and density of 0.50. Figure \ref{exp:vary_window_size} represents the performance results. We can see that the query processing time increases and the number of solved queries decreases as the window size becomes larger. This is because as the window size increases we have to maintain more data edges and more time-constrained embeddings occur. 
In elapsed time, \texttt{TCM} maintains its superior performance in most cases, compared to other algorithms. 
As shown in Figure \ref{exp:vary_window_size_solved}, as the window size increases, the number of queries solved by \texttt{Timing}, \texttt{RapidFlow}, and \texttt{SymBi} decreases sharply, but \texttt{TCM} does not.

\noindent\textbf{Memory usage.} Figure \ref{exp:memory_usage} describes the peak memory within the time limit for varying the query size, which is averaged over 100 queries. We measure the peak memory as the maximum virtual set size in the ``ps'' utility output. 
In datasets other than Yahoo and LSBench, \texttt{TCM} uses 10 times less memory than \texttt{Timing}. In particular, there is a difference of more than 100 times in several query sizes of Netflow. This difference in memory usage comes from \texttt{TCM} using a data structure with polynomial space, while \texttt{Timing} stores all partial matches and thus requires exponential space. Moreover, as the query size increases, the gap in memory usage between \texttt{TCM} and \texttt{Timing} tends to widen.

\subsection{Effectiveness of Our Techniques}\label{subsec:effectiveness}

We evaluate the effectiveness of our techniques in this subsection. To measure the performance gains obtained by each technique, we implement a variant of our algorithm and compare it with \texttt{SymBi} and \texttt{TCM}.

\begin{itemize}
    \item \texttt{SymBi}: a baseline for comparison.
    \item \texttt{TCM-Pruning}: using TC-matchable edges and not using time-constrained pruning techniques.
    \item \texttt{TCM}: using all techniques.
\end{itemize}
In this evaluation, we vary the query size, and fix the density to 0.50 and the window size to 30k. Figure \ref{exp:techniques} shows the results.

\vspace{1mm}
\noindent\textbf{Effectiveness of the TC-matchable edge.} 
To see the difference in the use of the TC-matchable edge technique, we compare the two algorithms \texttt{SymBi} and \texttt{TCM-Pruning}. Figure \ref{exp:techniques_time} shows that \texttt{TCM-pruning} outperforms \texttt{SymBi} in terms of query processing time. Specifically, \texttt{TCM-Pruning} is more than 10 times faster than \texttt{SymBi} across multiple datasets. Especially, \texttt{TCM-Pruning} outperforms \texttt{SymBi} by 152.75 times in Yahoo when the query size is 7.
In terms of the number of solved queries, \texttt{TCM-Pruning} solves much more queries than \texttt{SymBi} in all datasets within the time limit.

\fullpaper{
To evaluate the effectiveness of the TC-matchable edge, we also compare the filtering power with and without the TC-matchable edge. We measure the filtering power with two factors. One is the number of edges in \texttt{DCS}, i.e., the number of edge pairs (pairs of edges in the query graph and edges in the data graph) that pass filtering when the TC-matchable edge is used, and the number of edge pairs with the same label when the TC-matchable edge is not used. The other is the number of vertices remaining in \texttt{DCS} after the filtering in \cite{min2021symmetric}. Table \ref{tab:filtering_power} shows the average values obtained by dividing the value measured when the TC-matchable edge is used by the value measured when not using the TC-matchable edge. A smaller value in the table means more filtering when using the TC-matchable edge. 
}

\fullpaper{
\begingroup
\setlength{\tabcolsep}{4.5pt} %
\begin{table}[h]
    \centering
    \caption{Filtering power with and without the TC-matchable edge. Top: the ratio of the number of edges in \texttt{DCS}, bottom: the ratio of the number of vertices remaining in \texttt{DCS} after the filtering in \cite{min2021symmetric}.}
    \vspace*{-1mm}
    \begin{tabular}{cccccccc}
    \toprule
    \textbf{} & \textbf{5} & \textbf{7} & \textbf{9} & \textbf{11} & \textbf{13} & \textbf{15} & \textbf{avg} \\ 
    \midrule
    Netflow & 0.286 & 0.234 & 0.255 & 0.265 & 0.260 & 0.227 & 0.254 \\
    Wiki-talk & 0.286 & 0.204 & 0.172 & 0.176 & 0.151 & 0.127 & 0.186 \\
    Superuser & 0.337 & 0.270 & 0.223 & 0.247 & 0.185 & 0.149 & 0.235 \\
    StackOverflow & 0.233 & 0.173 & 0.132 & 0.124 & 0.102 & 0.068 & 0.138 \\
    Yahoo & 0.245 & 0.172 & 0.117 & 0.112 & 0.072 & 0.049 & 0.128 \\
    LSBench & 0.608 & 0.609 & 0.557 & 0.503 & 0.443 & 0.455 & 0.529 \\
    \midrule
    Netflow & 0.705 & 0.684 & 0.501 & 0.393 & 0.262 & 0.139 & 0.447 \\
    Wiki-talk & 0.499 & 0.306 & 0.221 & 0.208 & 0.199 & 0.131 & 0.261 \\
    Superuser & 0.619 & 0.497 & 0.427 & 0.401 & 0.313 & 0.234 & 0.415 \\
    StackOverflow & 0.311 & 0.204 & 0.159 & 0.100 & 0.067 & 0.077 & 0.153 \\
    Yahoo & 0.338 & 0.198 & 0.111 & 0.082 & 0.036 & 0.032 & 0.133 \\
    LSBench & 0.415 & 0.423 & 0.319 & 0.233 & 0.255 & 0.209 & 0.309 \\
    \bottomrule
    \end{tabular}

    \label{tab:filtering_power}
\end{table}
\endgroup
}

\vspace{1mm}
\noindent\textbf{Effectiveness of time-constrained pruning techniques.} 
When comparing the query processing time of \texttt{TCM-Pruning} and \texttt{TCM}, our time-constrained pruning techniques showed significant improvements in performance across various datasets (Figure \ref{exp:techniques_time}). Specifically, the average improvement in query processing time is 2.60 times in Netflow, 1.83 times in Wiki-talk, 1.40 times in Superuser, 1.04 times in StackOverflow, 1.03 times in Yahoo, and 1.00 times in LSBench.
Figure \ref{exp:techniques_solved} shows that \texttt{TCM} using time-constrained pruning techniques solves the same or more queries within the time limit than \texttt{TCM-Pruning} without the techniques.

\section{Conclusion}\label{sec:conclusion}
In this paper, we proposed two key techniques for time-constrained continuous subgraph matching to address the existing limitations: (1) TC-matchable edge for filtering and (2) a set of pruning techniques in backtracking. The former allows us to utilize temporal relationships between non-incident edges in filtering, and the latter allows us to prune some of the edges in the backtracking. We also suggest a data structure called max-min timestamp for the TC-matchable edge with polynomial space and an efficient method to update the data structure. Extensive experiments on real and synthetic datasets show that our approach outperforms the state-of-the-art algorithm in terms of query processing time and the number of queries solved within the time limit. Parallelizing our approach is an interesting future work.

\section*{Acknowledgements}\label{sec:ack}
S. Min, J. Jang, and K. Park were supported by Institute of Information communications Technology Planning Evaluation (IITP) grant funded by the Korea government (MSIT) (No. 2018-0-00551, Framework of Practical Algorithms for NP-hard Graph Problems). Giuseppe F. Italiano was partially supported by MUR. the Italian Ministry of University and Research, under PRIN Project n. 2022TS4Y3N - EXPAND: scalable algorithms for EXPloratory Analyses of heterogeneous and dynamic Networked Data. W.-S. Han was supported by the National Research Foundation of Korea (NRF) grant funded by the Korea government (MSIT) (No. NRF-2021R1A2B5B03001551).

\balance
\bibliographystyle{IEEEtran}
\bibliography{IEEEabrv,ref}

\vspace{12pt}

\end{document}